\newcommand{\overlap}{\odot}
\newcommand{\eps}{\varepsilon}
\newcommand{\cB}{{\mathcal B}}
\newcommand{\cD}{{\mathcal D}}
\newcommand{\cN}{{\mathcal N}}
\newcommand{\cW}{{\mathcal W}}
\newcommand{\Sig}{\Sigma}
\newcommand{\bi}{\begin{itemize}}
\newcommand{\ei}{\end{itemize}}
\newcommand{\be}{\begin{enumerate}}
\newcommand{\ee}{\end{enumerate}}
\newcommand{\bd}{\begin{description}}
\newcommand{\ed}{\end{description}}
\newcommand{\bq}{\begin{quote}}
\newcommand{\eq}{\end{quote}}
\title{State Complexity of Overlap Assembly\thanks{This work was supported by the Natural Sciences and Engineering Research Council of Canada under grants No.~OGP0000871 and R2824A01, and by the National Science Centre, Poland, under project number 2014/15/B/ST6/00615.}}
\author{Janusz~Brzozowski \inst{1}, 
Lila Kari\inst{1}, 
Bai Li \inst{1},
\and Marek Szyku{\l}a \inst{2}
}
\authorrunning{Brzozowski, Kari, Li, Szyku{\l}a}
\institute{David R. Cheriton School of Computer Science, University of Waterloo, \\
Waterloo, ON, Canada N2L 3G1\\
\email{brzozo@uwaterloo.ca, lila@uwaterloo.ca, bai.li.2005@gmail.com}
\and
Institute of Computer Science, University of Wroc{\l}aw,\\
Joliot-Curie 15, PL-50-383 Wroc{\l}aw, Poland\\
{\tt msz@cs.uni.wroc.pl}
}
\begin{document}
\maketitle

\begin{abstract}
The \emph{state complexity} of a regular language $L_m$ is the number $m$ of states in a minimal deterministic finite automaton (DFA) accepting $L_m$. The state complexity of a regularity-preserving binary operation on regular languages is defined as the maximal state complexity of the result of the operation where the two operands range over all languages of state complexities $\le m$ and $\le n$, respectively. We find a tight upper bound on the state complexity of the binary operation \emph{overlap assembly} on regular languages. This operation was introduced by Csuhaj-Varj\'{u}, Petre, and Vaszil to model the process of self-assembly of two linear DNA strands into a longer DNA strand, provided that their ends ``overlap''.
We prove that the state complexity of the overlap assembly of languages $L_m$ and $L_n$, where $m\ge 2$ and $n\ge1$, is at most $2 (m-1) 3^{n-1} + 2^n$. Moreover, for $m \ge 2$ and $n \ge 3$ there exist languages $L_m$ and $L_n$ over an alphabet of size $n$ whose overlap assembly meets the upper bound and this bound cannot be met with smaller alphabets. Finally, we prove that $m+n$ is a tight upper bound on the overlap assembly of unary languages, and that there are binary languages whose overlap assembly has exponential state complexity at least $m(2^{n-1}-2)+2$.
\medskip

\noindent
{\bf Keywords:} overlap assembly, regular language, state complexity, tight upper bound 
\end{abstract}

\section{Introduction}\label{sec:intro}

The \emph{state complexity} of a~regular language is the number of states in a~minimal deterministic finite automaton (DFA) accepting the language. 
The state complexity of a~regularity-preserving binary operation on regular languages is defined as the maximal state complexity of the result of the operation when the operands range over all languages of state complexities $\le m$ and $\le n$; it is a~function of $m$ and $n$.
State complexity was introduced by Maslov~\cite{Mas70} in 1970, but his short paper was relatively unknown for many years. 
Maslov stated without proof that the state complexity of the (Kleene) star of a~language $L_n$ of state complexity $n$ is $2^{n-1}+2^{n-2}$, that of reversal is $2^n$, that of concatenation of languages $L_m$ and $L_n$ of state complexities $m$ and $n$, respectively, is $(m-1)2^n+ 2^{n-1}$, and that of union is $mn$.
A~more complete study of state complexity including proofs was presented by Yu, Zhuang, and Salomaa~\cite{YZS94} in 1994. They proved that the state complexity of intersection is also $mn$.
The same bound also holds for other binary Boolean functions such as symmetric difference and difference~\cite{Brz13}.
Since the publication of the paper by Yu, Zhuang, and Salomaa, many authors have written on this subject; for an~extensive bibliography see the recent surveys~\cite{Brz18,GMRY16}.
In particular, the state complexities of the so-called basic operations, namely Boolean operations, concatenation, star and reversal in various subclasses of the class of regular languages have been studied~\cite{Brz18}.

In this paper, we consider the state complexity of a~biologically inspired binary word and language operation called \emph{overlap assembly}. Formally, overlap assembly is a~binary operation which, when applied to two input words $xy$ and $yz$ (where $y$ is their nonempty {\it overlap}), produces the output $xyz$. As a~formal language operation, overlap assembly was introduced by Csuhaj-Varj\'u, Petre, and Vaszil~\cite{CPV07} under the name ``self-assembly'', and studied by Enaganti, Ibarra, Kari and Kopecki~\cite{EIKK16,EIKK17}. A~particular case of overlap assembly, called {\it chop operation}, where the overlap consists of a~single letter, was studied in~\cite{HolzerJacobi2011,HolzerJacobi2012}, and generalized to an~arbitrary length overlap in~\cite{HoJaKu17}. Other similar operations have been studied in the literature, such as the {\it short concatenation} \cite{CarausuPaun81}, which uses only the maximum-length (possibly empty) overlap $y$ between operands, the Latin product of words \cite{Golan92} where the overlap consists of only one letter, and the operation $\bigotimes$ which imposes the restriction that the non-overlapping part $xz$ is not empty \cite{ItoLischke07}. Overlap assembly can also be considered as a~particular case of semantic shuffle on trajectories with trajectory $0^*\sigma^+1^*$ \cite{Domaratzki2009}\footnote{Informally, during a~shuffle between two words with a~trajectory over $\{0, 1, \sigma\}^+$, the symbols of the trajectory are interpreted as follows:
 0 (respectively 1) signifies that the corresponding letter from the first (respectively second) word is retained, and $\sigma$ signifies that a~letter from the first word is retained, provided it coincides with the corresponding letter in the second word.}, or as a~generalization of the operation $\bigodot_N$ from \cite{Domaratzki2009} which imposes the length of the overlap to be at least $N$.

The study of overlap assembly as a~formal language operation was initiated in the context of research on DNA-based information and DNA-based computation, as a~formalization of a~biological lab procedure that combines short linear DNA strands into longer ones, provided that their ends ``overlap''. The process of overlap assembly is enabled by an~active agent called the DNA polymerase enzyme, which has the property of being able to extend DNA strands, under certain conditions. Other DNA bio-operations enabled by the action of the DNA polymerase enzyme, which have been modeled and studied as formal language operations, include hairpin completion and its inverse operation, hairpin reduction~\cite{CMM06,Kop11,MMM09,MaMi07}, overlapping concatenation~\cite{MPPR03}, and directed extension~\cite{EKK15}.
 Experimentally, (parallel) overlap assembly of DNA strands under the action of the DNA polymerase enzyme was used for gene shuffling in, e.g.,~\cite{Ste94}. In the context of experimental DNA computing, overlap assembly was used in, e.g., \cite{CFLL99,FCLL00,KOTL97,OLL97} for the formation of combinatorial DNA or RNA libraries. Overlap assembly can also be viewed as modeling a~special case of an~experimental lab procedure called cross-pairing PCR, introduced in \cite{FrMa11} and studied in, e.g., \cite{Fra05,FGLM05,FMGL06,MaFr08}.

In this paper, we investigate the state complexity of overlap assembly as a~binary operation on regular languages. The paper is organized as follows. Section~\ref{sec:overlap} describes the biological motivation of overlap assembly. Section~\ref{sec:NFA_construction} introduces our notation and describes the construction of an~NFA that accepts the results of overlap assembly of two regular languages, given by their accepting DFAs.
 In Section~\ref{sec:general} we prove that the state complexity of the overlap assembly of languages $L_m$ and $L_n$, where $m\ge 2$ and $n\ge1$, is at most $2 (m-1) 3^{n-1} + 2^n$ (Theorem~\ref{thm:upper_bound}). Moreover, for $m \ge 2$ and $n \ge 3$ there exist languages $L_m$ and $L_n$ over an~alphabet of size $n$ whose overlap assembly meets the upper bound (Theorem~\ref{thm:meets_upper_bound}) and, in addition, this bound cannot be met with smaller alphabets (Theorem~\ref{thm:min_alphabet}).
Section~\ref{sec:unary} proves that $m+n$ is a~tight upper bound on the descriptional complexity of the overlap assembly of two unary regular languages $L_m$ and $L_n$ (Theorem \ref{thm:unary}), and in Section~\ref{sec:binary} we show that in the case of a~binary alphabet the state complexity can be at least $m(2^{n-1}-2)+2$, thus is already exponential in $n$.

A shorter version of this work not containing the results about unary and binary alphabets has appeared in~\cite{BKLS18}.

\section{Overlap Assembly}\label{sec:overlap}

The bio-operation of overlap assembly was intended to model the procedure whereby short DNA single strands can be concatenated (assembled) together into longer strands under the action of the enzyme DNA polymerase, provided they have ends that ``overlap''. Recall that DNA single strands are oriented words from the DNA alphabet $\Delta = \{A, C, G, T\}$, where one end of a~strand is labeled by $5'$ and the other by $3'$. Watson/Crick (W/C) complementarity of DNA strands couples $A$ to $T$ and $C$ to $G$ and acts as follows: Given two W/C single strands, of opposite orientation, and whose letters are complementary at each position, the W/C complementarity of DNA strands binds the two single strands together by covalent bonds, to form a~DNA double strand. The W/C complementarity of DNA strands has been traditionally modeled \cite{Hussini2002,Kari2002} as an~antimorphic involution $\theta\colon \Delta^* \longrightarrow \Delta^*$, that is, an~involution on $\Delta$ ($\theta^2$ is the identity on $\Delta$) extended to an~antimorphism on $\Delta^*$, whereby $\theta(uv) = \theta(v) \theta(u)$ for all $u, v \in \Delta^*$. In this formalism, the W/C complement of a~DNA strand $u \in \Delta^+$ is $\theta(u)$.

Using the convention that a~word $x$ over the DNA alphabet represents the DNA single strand $x$ in the $5'$ to $3'$ direction (usually depicted as the top strand of a~double DNA strand), the {\it overlap assembly} of a~strand $uv$ with a~strand $\theta(w)\theta(v)$ first forms a~partially double-stranded DNA molecule, where the substrand $v$ in $uv$ binds to the substrand $\theta(v)$ in $\theta(w)\theta(v)$; see Figure~\ref{fig:OC_graphical}(a). The DNA polymerase enzyme will then extend the $3'$ end of $uv$ with the strand $w$; see Figure~\ref{fig:OC_graphical}(b). Similarly, the $3'$ end of $\theta(w)\theta(v)$ will be extended, resulting in a~full double strand whose upper strand is $5' - uvw - 3'$, and bottom strand is $5' -\theta(w)\theta(v)\theta(u)-3'$, see Figure~\ref{fig:OC_graphical}(c). Thus, in principle, the overlap assembly between $u v$ and $ \theta(w)\theta(v)$ results in the strands $uvw$ and $\theta(uvw) = \theta(w)\theta(v)\theta(u)$. 

\begin{figure}[htb]\centering
{\unitlength 8.0pt\large\begin{picture}(37,24)(-1,-3)
\put(5,20){
\node[Nframe=n]()(-5,0){(a)}
\node[Nframe=n]()(5,1){$u$}
\node[Nframe=n]()(15,1){$v$}
\node[Nframe=n]()(0,1.5){\normalsize$5'$}
\node[Nframe=n]()(20,1.5){\normalsize$3'$}
\drawline[AHnb=0](0,-.5)(0,.5)
\drawline[AHnb=0,linewidth=.4](0,0)(10,0)
\drawline[AHnb=0](10,-.5)(10,.5)
\drawline[AHnb=0,linewidth=.4](10,0)(20,0)
\drawline[AHnb=0](20,.5)(20,-.5)
\drawline[AHnb=0,dash={0.5 0.5}0](20,0)(30,0)
}\put(5,18){
\node[Nframe=n]()(10,-1.5){\normalsize$3'$}
\node[Nframe=n]()(30,-1.5){\normalsize$5'$}
\node[Nframe=n]()(15,-1.5){$\theta(v)$}
\node[Nframe=n]()(25,-1.5){$\theta(w)$}
\drawline[AHnb=0](10,.5)(10,-.5)
\drawline[AHnb=0,linewidth=.4](10,0)(20,0)
\drawline[AHnb=0](20,.5)(20,-.5)
\drawline[AHnb=0,linewidth=.4](20,0)(30,0)
\drawline[AHnb=0](30,.5)(30,-.5)
}

\put(5,11){
\node[Nframe=n]()(0,1.5){\normalsize$5'$}
\node[Nframe=n]()(30,1.5){\normalsize$3'$}
\node[Nframe=n]()(-5,0){(b)}
\node[Nframe=n]()(5,1){$u$}
\node[Nframe=n]()(15,1){$v$}
\node[Nframe=n]()(25,1){$w$}
\drawline[AHnb=0](0,-.5)(0,.5)
\drawline[AHnb=0,linewidth=.4](0,0)(10,0)
\drawline[AHnb=0](10,-.5)(10,.5)
\drawline[AHnb=0,linewidth=.4](10,0)(20,0)
\drawline[AHnb=0](20,.5)(20,-.5)
\drawline[AHnb=0,linewidth=.4](20,0)(30,0)
\drawline[AHnb=0](30,.5)(30,-.5)
}\put(5,9){
\node[Nframe=n]()(10,-1.5){\normalsize$3'$}
\node[Nframe=n]()(30,-1.5){\normalsize$5'$}
\node[Nframe=n]()(15,-1.5){$\theta(v)$}
\node[Nframe=n]()(25,-1.5){$\theta(w)$}
\drawline[AHnb=0](10,.5)(10,-.5)
\drawline[AHnb=0,linewidth=.4](10,0)(20,0)
\drawline[AHnb=0](20,.5)(20,-.5)
\drawline[AHnb=0,linewidth=.4](20,0)(30,0)
\drawline[AHnb=0](30,.5)(30,-.5)
}

\put(5,2){
\node[Nframe=n]()(0,1.5){\normalsize$5'$}
\node[Nframe=n]()(30,1.5){\normalsize$3'$}
\node[Nframe=n]()(-5,0){(c)}
\node[Nframe=n]()(5,1){$u$}
\node[Nframe=n]()(15,1){$v$}
\node[Nframe=n]()(25,1){$w$}
\drawline[AHnb=0](0,-.5)(0,.5)
\drawline[AHnb=0,linewidth=.4](0,0)(10,0)
\drawline[AHnb=0](10,-.5)(10,.5)
\drawline[AHnb=0,linewidth=.4](10,0)(20,0)
\drawline[AHnb=0](20,.5)(20,-.5)
\drawline[AHnb=0,linewidth=.4](20,0)(30,0)
\drawline[AHnb=0](30,.5)(30,-.5)
}\put(5,0){
\node[Nframe=n]()(0,-1.5){\normalsize$3'$}
\node[Nframe=n]()(30,-1.5){\normalsize$5'$}
\node[Nframe=n]()(5,-1.5){$\theta(u)$}
\node[Nframe=n]()(15,-1.5){$\theta(v)$}
\node[Nframe=n]()(25,-1.5){$\theta(w)$}
\drawline[AHnb=0](0,.5)(0,-.5)
\drawline[AHnb=0,linewidth=.4](0,0)(10,0)
\drawline[AHnb=0](10,.5)(10,-.5)
\drawline[AHnb=0,linewidth=.4](10,0)(20,0)
\drawline[AHnb=0](20,.5)(20,-.5)
\drawline[AHnb=0,linewidth=.4](20,0)(30,0)
\drawline[AHnb=0](30,.5)(30,-.5)
} 
\end{picture}}
\caption{(a) The two input DNA single-strands, $uv$ and $\theta(w)\theta(v)$ bind to each other through their complementary segments $v$ and $\theta(v)$, forming a~partially double-stranded DNA complex. (b) DNA polymerase extends the $3'$ end of the strand $uv$. (c) DNA polymerase extends the $3'$ end of the other strand. The resulting DNA double strand is considered to be the output of the {\it overlap assembly} of the two input single strands.}
\label{fig:OC_graphical}
\end{figure}
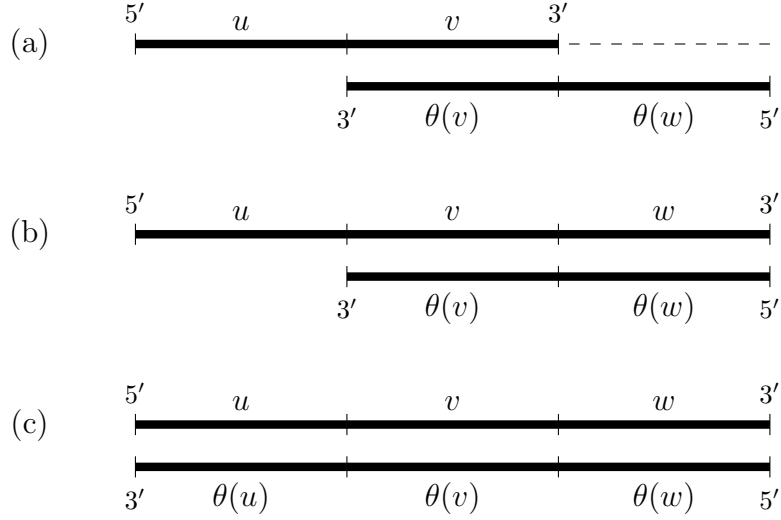

Assuming that all involved DNA strands are initially double-stranded, that is, whenever the strand $x$ is available, its W/C complement $\theta(x)$ is also available, this model was further simplified \cite{CPV07} as follows: Given words $x, y$ over an~alphabet $\Sigma$, the {\it overlap assembly of $x$ with $y$} is defined as:
\[
x \overlap y = \{z \in \Sigma^+ \mid \exists u,w \in \Sigma^*, \exists v \in\Sigma^+: x = uv, y = vw , z = uvw\}.
\]

This can be naturally generalized to languages: Given languages $L_m$ and $L_n$ of state complexities $m$ and $n$, respectively, the overlap assembly of $L_m$ and $L_n$ is defined as:
$$ L_m \overlap L_n = \{ z \mid z = x \overlap y,\,  x \in L_m, y \in L_n \}. $$

\section{An~$\eps$-NFA for Overlap Assembly}
\label{sec:NFA_construction}

A~\emph{deterministic finite automaton (DFA)} is a~quintuple
$\cD=(Q, \Sigma, \delta, q_0,F)$, where
$Q$ is a~finite non-empty set of \emph{states},
$\Sig$ is a~finite non-empty \emph{alphabet},
$\delta\colon Q\times \Sig\to Q$ is the \emph{transition function},
$q_0\in Q$ is the \emph{initial} state, and
$F\subseteq Q$ is the set of \emph{final} states.
We extend $\delta$ to functions $\delta\colon Q\times \Sig^*\to Q$ and $\delta\colon 2^Q\times \Sig^*\to 2^Q$ as usual.
A~DFA $\cD$ \emph{accepts} a~word $w \in \Sigma^*$ if ${\delta}(q_0,w)\in F$. The language accepted by $\cD$ is denoted by $L(\cD)$. If $q$ is a~state of $\cD$, then the language $L_q(\cD)$ of $q$ is the language accepted by the DFA $(Q,\Sigma,\delta,q,F)$. 
A~state is \emph{empty} (or \emph{dead} or a~\emph{sink state}) if its language is empty. Two states $p$ and $q$ of $\cD$ are \emph{equivalent} if $L_p(\cD) = L_q(\cD)$. 
A~state $q$ is \emph{reachable} if there exists $w\in\Sig^*$ such that $\delta(q_0,w)=q$.
A~DFA $\cD$ is \emph{minimal} if it has the smallest number of states and the smallest alphabet among all DFAs accepting $L(\cD)$.
It is well known that a~DFA is minimal if it uses the smallest alphabet, all of its states are reachable, and no two states are equivalent.

A~\emph{nondeterministic finite automaton (NFA)} is a~quintuple
$\mathcal{N}=(R,\Sigma,\eta,I,F)$, where $R$, $\Sigma$, and $F$ are as $Q$, $\Sigma$, and $F$ in a~DFA respectively,
$\eta\colon R\times \Sigma\to 2^{R}$, and
$I\subseteq R$ is the \emph{set of initial states}. 
Each triple $(p,a,q)$ with $p,q\in R$, $a\in\Sig$ is a~\emph{transition} if $q\in \eta(p,a)$.
A~sequence $((p_0,a_0,q_0), (p_1,a_1,q_1), \dots, (p_{k-1},a_{k-1},q_{k-1}))$
of transitions, where $p_{i+1}=q_i$ for $i=0, \dots, k-2$ is a~\emph{path} in $\cN$.
The word $a_0a_1\cdots a_{k-1}$ is the word \emph{spelled} by the path. 
A~word $w$ is \emph{accepted} by $\cN$ if there exists a~path with $p_0\in I$ and $q_{k-1}\in F$ that spells $w$.
If $q\in \eta(p,a)$ we also use the notation $p \xrightarrow{a} q$. We extend this notation also to words, and write 
$p \xrightarrow{w} q$ for $w\in\Sig^*$.
An~\emph{$\varepsilon$-NFA} is an~NFA in which transitions under the empty word $\varepsilon$ are also permitted.

\medskip

Given any two DFAs, we construct an~$\eps$-NFA that recognizes the overlap assembly of the languages accepted by the DFAs. This proves constructively that the family of regular languages is closed under overlap assembly.

Let $\cD_m = (Q_m, \Sigma, \delta_m, 0, F)$ and $\cD'_n = (Q'_n, \Sigma, \delta'_n, 0', F')$ be two DFAs with $\cD_m$ recognizing $L_m$ and $\cD'_n$ recognizing $L'_n$, where $F=\{f_1,\dots,f_h\}$ and 
$F'=\{f'_1,\dots,f'_{h'}\}$. Let $Q_m = \{0, \ldots, m-1\}$, $Q'_n = \{0',\ldots,(n-1)'\}$, and let $0$ and $0'$ be the initial states. 
We claim that the NFA $\cN$, constructed as shown below, accepts the result of the overlap assembly of $L_m$ and $L_n'$.

The NFA is defined as $\cN = (R, \Sigma, \eta,\{r_0\}, F_\cN)$ where the set of states is $R = (Q_m \cup \{t\}) \times (Q_n' \cup \{ s' \})$ with $s', t$ new symbols not occurring in $Q_m\cup Q_n'$, the initial state is $r_0 = (0,s')$, and the set of final states is $F_\cN = \{(t, q') \mid q' \in F'\}$.
Intuitively, the NFA simulates reading the word first by $\cD_m$, then by both $\cD_m$ and $\cD'_n$, and then by $\cD'_n$.
Hence the states in $R$ contain a~state of $\cD_m$ and a~state of $\cD'_n$.
The states with $s'$ indicate that the second DFA has not yet read any letter, while the states with $t$ indicate that the first DFA has finished its reading.
The set of transitions $\eta$ is defined below.
The informal explanations at the right of transition definitions assume two operands $uv \in L_m$ and $vw \in L'_n$ respectively. The word $z = uvw$ belongs to their overlap assembly.
\begin{enumerate}[i]
\item $\{(q_i, s') \xrightarrow{a} (q_j, s') \mid q_i \xrightarrow{a} q_j \in \delta_m\};$
read $u$.
\item $\{(q_i, s') \xrightarrow{a} (q_j, q_k') \mid q_i \xrightarrow{a} q_j \in \delta_m, \; 0' \xrightarrow{a} q_k' \in \delta'_n\};$
read the first letter of $v$.
\item $\{(q_i, q'_k) \xrightarrow{a} (q_j, q'_\ell) \mid q_i \xrightarrow{a} q_j \in \delta_m, \; q'_k \xrightarrow{a} q'_\ell \in \delta'_n\};$
read the remainder of $v$.
\item $\{(f_i, q'_k) \xrightarrow{\eps} (t, q_k') \mid f_i\in F, \; q'_k \in Q'_n \};$
$v$ has been read.
\item $\{(t, q'_k) \xrightarrow{a} (t, q'_\ell) \mid q'_k \xrightarrow{a} q'_\ell\in \delta'_n \};$
these rules read $w$.
\end{enumerate}

Figure~\ref{fig:NFA-example} illustrates the construction of such an~NFA, denoted by $\cN'$, for two particular two-state DFAs $\cD_2$ and $\cD'_2$    accepting the languages  $L(D_2)$ (all words over $\{a, b\}^*$ that have an~odd number of $a$s) and $L(D_2')$  (all words over $\{a, b\}^*$ that end in the letter $a$). Note that the overlap assembly of $L(D_2)$ and $L(D_2')$ is $L(D_2')$.

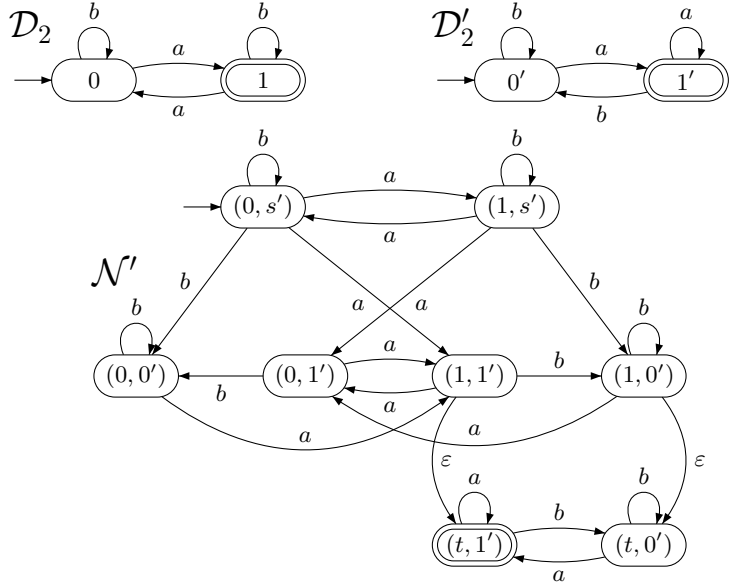
\begin{figure}[h]
\unitlength 8.0pt
\begin{center}\begin{picture}(38,30)(-8,-8)
\gasset{Nh=2.0,Nw=4,Nmr=1.25,ELdist=0.4,loopdiam=1.5}

\node(0)(-4,14){$0$}\imark(0)
\node(1)(4,14){$1$}\rmark(1)
\drawedge[curvedepth= .8,ELdist=.4](0,1){$a$}	
\drawedge[curvedepth= .8,ELdist=.4](1,0){$a$}
\drawloop(0){$b$}
\drawloop(1){$b$}
\node[Nframe=n](Dm)(-7,16.5){{\Large$\cD_2$}}
\node[Nframe=n](Dn)(13,16.5){{\Large$\cD'_2$}}

\node(0')(16,14){$0'$}\imark(0')
\node(1')(24,14){$1'$}\rmark(1')
\drawedge[curvedepth= .8,ELdist=.4](0',1'){$a$}	
\drawedge[curvedepth= .8,ELdist=.4](1',0'){$b$}
\drawloop(0'){$b$}
\drawloop(1'){$a$}

\node(0s')(4,8){$(0,s')$}\imark(0s')
\node(1s')(16,8){$(1,s')$}
\drawedge[curvedepth= .8,ELdist=.4](0s',1s'){$a$}	
\drawedge[curvedepth= .8,ELdist=.4](1s',0s'){$a$}
\drawloop(0s'){$b$}
\drawloop(1s'){$b$}

\node[Nframe=n](Nn)(-3,5){{\Large$\cN'$}}

\node(00')(-2,0){$(0,0')$}
\node(01')(6,0){$(0,1')$}
\node(11')(14,0){$(1,1')$}
\node(10')(22,0){$(1,0')$}
\drawedge[curvedepth= -3.5,ELdist=.4](00',11'){$a$}
\drawloop(00'){$b$}	
\drawedge[curvedepth= .8,ELdist=.4](01',11'){$a$}
\drawedge(01',00'){$b$}
\drawedge(11',10'){$b$}
\drawedge[curvedepth= 3.3,ELdist=-1](10',01'){$a$}
\drawedge[curvedepth= .8,ELdist=.4](11',01'){$a$}
\drawloop(10'){$b$}	

\drawedge[ELdist=-1.3](0s',00'){$b$}
\drawedge[ELdist=-1.2](0s',11'){$a$}
\drawedge(1s',01'){$a$}
\drawedge(1s',10'){$b$}

\node(t1')(14,-8){$(t,1')$}\rmark(t1')
\node(t0')(22,-8){$(t,0')$}
\drawedge[curvedepth= .8,ELdist=.4](t1',t0'){$b$}	
\drawedge[curvedepth= .8,ELdist=.4](t0',t1'){$a$}
\drawloop(t0'){$b$}
\drawloop(t1'){$a$}
\drawedge[curvedepth= -2,ELdist=.4](11',t1'){$\eps$}
\drawedge[curvedepth= 2,ELdist=.4](10',t0'){$\eps$}

\end{picture}\end{center}
\caption{An~example of an~NFA $\cN'$ that accepts the overlap assembly of the languages accepted by the DFAs $\cD_2$ (which accepts all words over $\{a, b\}^*$ that have an~odd number of $a$s) and $\cD'_2$ (which accepts all words over $\{a, b\}^*$  that end in the letter $a$).}
\label{fig:NFA-example}
\end{figure}

 In the automaton $\cN'$ of Figure~\ref{fig:NFA-example}, states $(0,s')$ and $(1,s')$ in the first row of the figure behave as specified in Rule~(i), using the transitions of $\cD_2$.
Rule~(ii) moves the states from the first row to the second row of the figure. 
In the second row, the transitions are those of the direct product of $\cD_2$ and $\cD'_2$, as directed by Rule~(iii).
Note that neither Rule~(i) nor Rule~(ii) can be used again since $s'$ does not appear as a~component of any state after Rule~(iii) is used.
When $\cN'$ is in a~state where the first component is $1$, which is a~final state of $\cD_2$, $\cN'$ can move to the next row following Rule~(iv) and change the first component of the state to $t$.
Note that Rule~(iii) cannot be used again since $t$ appears as the first component of every state after Rule~(iv) is used.
Finally, $\cN'$ moves to the third row and follows the transitions of $\cD'_2$.
Note that Rule~(iv) cannot be used again because of $t$. While the NFA $\cN'$ has eight states, converting it to a~DFA and minimizing this DFA results   in $D_2'$. 
The NFA $\cN'$ accepts the overlap assembly of $L(D_2)$ and $L(D_2')$. In general, the following result holds:

\begin{proposition}
Let $L_m$ and $L_n'$ be two regular languages accepted by the DFAs defined above, and let the NFA $\cN$ be the automaton constructed as above. 
NFA $\cN$ has the following properties:
\begin{enumerate}
\item
If $uv \in L_m$ and $vw \in L'_n$, then $r_0 \xrightarrow{uvw} r_f$ in $\cN$ where $r_f \in F_\cN$.
\item
If  $r_0 \xrightarrow{z} r_f$ in $\cN$, then there exist $u,w \in\Sig^*$, $v\in \Sig^+$ such that $z = uvw$, where $uv \in L_m$ and $vw \in L'_n$.
\item
$\cN$ accepts $L_m \overlap L'_n$.
\end{enumerate}
\end{proposition}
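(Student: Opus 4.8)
The plan is to prove the three parts of the Proposition in the natural order, since part (3) follows immediately from parts (1) and (2) together with the definition of overlap assembly. The overall strategy is to track carefully how the five transition rules decompose any accepting path into three phases, matching the three-phase intuition described before the statement: reading $u$ while $\cD_m$ runs alone, reading $v$ while both DFAs run in lockstep, and reading $w$ while $\cD'_n$ runs alone.

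For part (1), I would start from operands $uv \in L_m$ and $vw \in L'_n$ with $v \in \Sig^+$, and explicitly build an accepting path for $z = uvw$. The construction mirrors the rules directly: begin at $r_0 = (0,s')$ and use Rule~(i) to read $u$, staying in the first row and ending at $(\delta_m(0,u), s')$. Since $v$ is nonempty, write $v = a v'$ and use Rule~(ii) to read the first letter $a$, synchronizing the second coordinate with $0'$, then use Rule~(iii) to read the remaining letters $v'$, so that after reading all of $v$ the NFA is in state $(\delta_m(0,uv), \delta'_n(0',v))$. Because $uv \in L_m$, the first coordinate $\delta_m(0,uv)$ is some $f_i \in F$, so Rule~(iv) applies via an $\eps$-transition to reach $(t, \delta'_n(0',v))$. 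Finally Rule~(v) reads $w$ while the first coordinate stays at $t$, ending at $(t, \delta'_n(0',vw))$; since $vw \in L'_n$ we have $\delta'_n(0',vw) \in F'$, so this is a state in $F_\cN$. This half is essentially bookkeeping.

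Part (2) is the converse and is where the real content lies, so I expect it to be the main obstacle. Given an accepting path $r_0 \xrightarrow{z} r_f$, I must reconstruct the factorization $z = uvw$ with $uv \in L_m$ and $vw \in L'_n$. The key structural observation is that the rules force a rigid ordering on the path. The $s'$ symbol appears only in the second coordinate of first-row states, and Rules~(i),(ii) are the only rules whose source has second coordinate $s'$; once Rule~(ii) or~(iii) is applied the second coordinate is a genuine state of $\cD'_n$ and $s'$ never recurs, and similarly once Rule~(iv) introduces $t$ in the first coordinate it never disappears. Thus any accepting path must consist of a (possibly empty) block of Rule~(i) transitions, then exactly one Rule~(ii) transition, then a block of Rule~(iii) transitions, then exactly one Rule~(iv) $\eps$-transition, then a block of Rule~(v) transitions. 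I would let $u$ be the word spelled by the Rule~(i) block, $v$ the word spelled by the Rule~(ii) transition together with the Rule~(iii) block, and $w$ the word spelled by the Rule~(v) block. The presence of exactly one Rule~(ii) transition guarantees $v \in \Sig^+$, as required for a legal overlap. Reading off the first coordinates shows $\delta_m(0,uv) = f_i \in F$, hence $uv \in L_m$; reading off the second coordinates, which track $\cD'_n$ starting from $0'$ on the first letter of $v$ and continuing through $v$ and $w$, shows $\delta'_n(0',vw) \in F'$, hence $vw \in L'_n$.

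Finally, for part (3), combining parts (1) and (2) gives that $z$ is accepted by $\cN$ if and only if $z = uvw$ for some $uv \in L_m$, $vw \in L'_n$ with $v$ nonempty, which is exactly the definition of $z \in L_m \overlap L'_n$. The main care needed throughout is the case analysis justifying the rigid phase structure in part~(2); once that is established the rest is a direct translation between path components and the words $u$, $v$, $w$.
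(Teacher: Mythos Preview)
Your proposal is correct and follows essentially the same approach as the paper's own proof: both arguments construct the accepting path for part~(1) by writing $v = a v'$ and applying Rules~(i)--(v) in order, and both handle part~(2) by observing that any accepting path must decompose into the rigid phase structure (a block of~(i), exactly one~(ii), a block of~(iii), exactly one~(iv), a block of~(v)), then reading off $u$, $v$, $w$ from these phases. Your write-up is slightly more explicit about why the phase ordering is forced, but the content is the same.
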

\begin{proof}
\begin{enumerate}
\item
For the first claim,  let $v = ax$, where $a\in\Sig$. If $uv \in L_m$  then  $0 \xrightarrow{uax} f_i$, for some $f_i\in F$ in $\cD_m$.
So there exist $q_i$ and $q_j$ in $Q_m$ such that 
$0 \xrightarrow{u} q_i \xrightarrow{a} q_j \xrightarrow{x} f_i$ in $\cD_m$.
Similarly, if $vw \in L_n$, then there exist $q'_k$ and $q'_\ell$ in $Q'_n$ such that  
$0' \xrightarrow{a} q'_k \xrightarrow{x} q'_\ell  \xrightarrow{w}  f'_j$, for some $f'_j \in F'$ in $\cD'_n$.

By construction we have in $\cN$: 
$$(0,s') \xrightarrow[(i)]{u} (q_i,s') \xrightarrow[(ii)]{a} (q_j,q'_k) \xrightarrow[(iii)]{x} (f_i,q'_\ell) 
\xrightarrow[(iv)]{\eps} (t,q'_\ell) \xrightarrow[(v)]{w} (t,f'_j),
$$
which proves our first claim.
\item
 Suppose that $r_0  \xrightarrow{z} r_f$ in $\cN$, where $r_f \in F_\cN$. By the construction of $\cN$, such a~path must proceed by $i$ applications of rule (i),  one application of rule (ii),  $j$ applications of rule (iii),  one $\eps$-transition via rule (iv), and $k$ applications of rule (v), where $i, j, k \geq 0$.
Thus there exist $u$, $v$, and $w$ in $\Sig^*$ such that  $z = uvw$, $|u| = i$, $|v| = j+1$,  and $|w| = k$. Owing to the construction of $\cN$, there must exist derivations $0 \xrightarrow{uv} f_i$ in $\cD_m$ and $0' \xrightarrow{vw} f'_j$ in $\cD'_n$, which means $uv \in L_m$ and $vw \in L'_n$.
\item
If $x \in L_m$ and $y \in L'_n$, then by (1), for every $u, v, w$ where $x = uv$ and $y = vw$, $uvw$ is recognized by $\cN$; so $L_m \overlap L_n \subseteq L(\cN)$.
Conversely, if a~word $z$ is recognized by $\cN$, then by (2), $z = uvw$ for some $u, v, w$ where $uv \in L_m$ and $vw \in L_n$; so $L(\cN) \subseteq L_m \overlap L_n$.
Hence $L(\cN) = L_m \overlap L_n$.\qed
\end{enumerate}
\end{proof}

Figure~\ref{fig:NFA-structure} shows the overall structure of the NFA $\cN$, with examples of  transitions of different types.

\begin{figure}[htb]
\unitlength 7pt\scriptsize
\begin{center}\begin{picture}(40,31)(-3,-2)
\gasset{Nh=2,Nw=5,Nmr=2,ELdist=0.2,loopdiam=1.5}

\node[Nh=8,Nw=21,Nmr=4,dash={.5 .5}0](box0)(5,25.5){}
\node(0xs)(5,27){$(0,s')$}\imark(0xs)
\node(0x0)(0,24){$(0,0')$}
\node[Nframe=n](0xdots)(5,24){$\dots$}
\node(0xn)(10,24){$(0,(n$-$1)')$}

\node[Nframe=n](vdots1)(5,20){$\vdots$}

\node[Nh=8,Nw=21,Nmr=4,dash={.5 .5}0](boxf)(5,13.5){}
\node(fxs)(5,15){$(f,s')$}
\node(fx0)(0,12){$(f,0')$}
\node[Nframe=n](fxdots)(5,12){$\dots$}
\node(fxn-1)(10,12){$(f,(n$-$1)')$}

\node[Nh=8,Nw=21,Nmr=4,dash={.5 .5}0](boxt)(31,13.5){}
\node(txs)(31,15){$(t,s')$}
\node(tx0)(26,12){$(t,0')$}
\node[Nframe=n](txdots)(31,12){$\dots$}
\node(txn-1)(36,12){$(t,(n$-$1)')$}

\node[Nframe=n](vdots2)(5,8){$\vdots$}

\node[Nh=8,Nw=21,Nmr=4,dash={.5 .5}0](boxm-1)(5,1.5){}
\node(m-1xs)(5,3){$(m$-$1,s')$}
\node(m-1x0)(0,0){$(m$-$1,0')$}
\node[Nframe=n](m-1xdots)(5,0){$\dots$}
\node[Nw=7](m-1xn-1)(10.5,0){$(m$-$1,(n$-$1)')$}

\drawedge(boxf,boxt){$\varepsilon$~(iv)}
\drawedge[curvedepth=10,ELpos=55](0xs,fxn-1){$a$~(ii)}
\drawedge[curvedepth=4,ELpos=65](0x0,fxn-1){$a$~(iii)}
\drawedge[curvedepth=-11,sxo=-2,exo=-2,syo=-.1,eyo=.1,ELside=r,ELpos=65](0xs,fxs){$a$~(i)}
\drawedge[curvedepth=-4,ELside=r](tx0,txn-1){$a$~(v)}
\end{picture}\end{center}
\caption{The structure of the NFA that accepts the overlap assembly of two regular languages $L_m$ and $L_n'$, with example transitions of every type. Assume that  $D_m$ has the transition  $0 \xrightarrow{a}  f$, that $D_n'$ has the transition $0' \xrightarrow{a} (n-1)'$ and that $f$ is one of the final states of  $D_m$. The first of these two transitions gives rise to $(0, s') \xrightarrow{a} (f, s')$ (type (i)),  while the first and second transition together give rise to $(0, s') \xrightarrow{a} (f, (n-1)')$ (type (ii)) and $(0, 0') \xrightarrow{a} (f, (n-1)')$ (type (iii)).  Since $f$ is final, a~transition $(f, j') \xrightarrow{\varepsilon} (t, j')$ (type (iv)) exists for   all $0\leq j \leq (n-1)$. Lastly,  the second transition gives rise to $(t, 0') \xrightarrow{a} (t, (n-1)')$ (type (v)).   
 }
\label{fig:NFA-structure}
\end{figure}
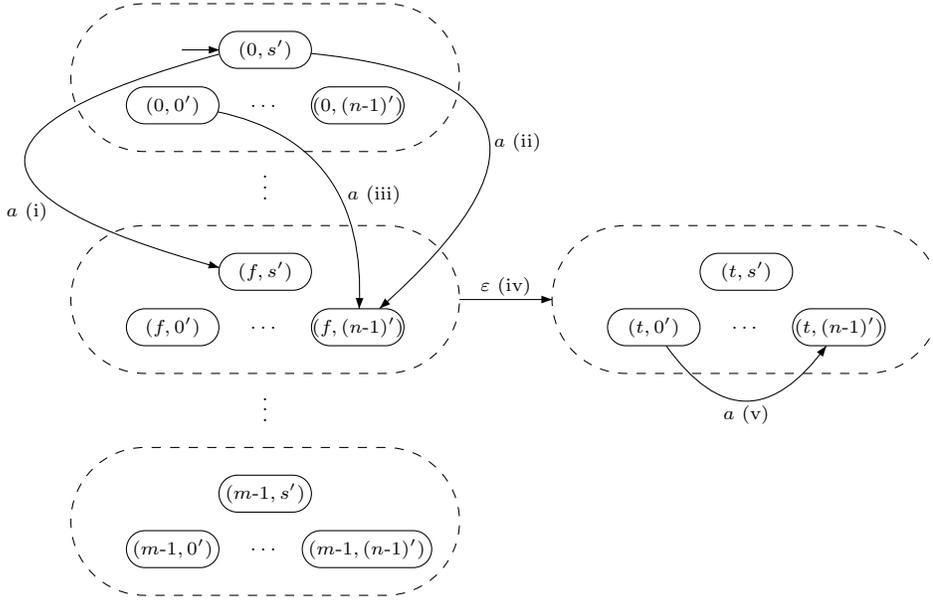

\section{Tight Upper Bound for Overlap Assembly in the General Case}
\label{sec:general}

To establish the state complexity of overlap assembly we need to determinize the $\eps$-NFA $\cN = (R, \Sigma, \eta, r_0, F_\cN)$ defined in Section~\ref{sec:NFA_construction}, and then minimize the resulting DFA.
The first step is to find an~upper bound on the number of subsets $S$ of the set $R$ of states of $\cN$.
We begin by characterizing the reachable subsets of $R$.
They all have the form 
\begin{equation}\label{eq:reach}
S = \{(q,s')\} \cup (\{q\} \times S') \cup (\{t\} \times T'),
\end{equation}
where $q \in Q_m$, $T' \subseteq S' \subseteq Q'_n$ if $q \notin F$, $T' = S' \subseteq Q'_n$ if $q \in F$, and $S'$ is non-empty unless $S=\{ ( 0, s')\}$.
We call $q$ the \emph{selector} of $S$, subset $S' \setminus \{0'\}$ is its \emph{core}, and subset $T'$ is its \emph{subcore}.

We illustrate this using the NFA of Figure~\ref{fig:NFA-example}.
The initial subset is $\{ (0, s')\}$; this has form~(\ref{eq:reach}) with 
$S'=T'=\emptyset$.
From this initial subset we reach by $b$ the subset
$\{(0,s'), (0,0')\} = \{0,s'\} \cup (\{0\} \times \{0'\})$; here $T'=\emptyset$ and $S' =\{0'\}$.
By $a$ we reach 
$\{ (1,s') \} \cup \{(1,1')\} \cup \{(t,1')\} = \{ (1,s') \} \cup  (\{1\} \times \{1'\}) \cup
(\{t\} \times \{1'\})$; here $S'=T'=\{1'\}$.\\

We now proceed to prove the claim about form (1).
\begin{lemma}\label{lem:subsets}
Let $m \geq 2$, $n \ge 1$, and let $\cD$ be the DFA obtained by determinization of the NFA for the overlap assembly $L_m \overlap L_n$.
Every reachable subset of $\cD$ is of the form {\rm (\ref{eq:reach})}.
Moreover, if $q \notin F$, then $S$ cannot be distinguished from $S \cup \{(q,0')\}$.
\end{lemma}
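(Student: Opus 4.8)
The plan is to prove both assertions together by induction on the length of the word read from the initial subset, tracking how the determinized transition acts on sets of the form (\ref{eq:reach}). For a subset $X$ of $R$ and a letter $a$, let me write $X^a$ for the subset of $\cD$ reached from $X$ by $a$, i.e. the $\eps$-closure of the union of the $a$-successors in $\cN$ of the states of $X$; I will use that $(\cdot)^a$ distributes over unions of subsets, and I abbreviate $q_a=\delta_m(q,a)$. The base case is the initial subset $\{(0,s')\}$, which has form (\ref{eq:reach}) with $S'=T'=\emptyset$; since $(0,s')$ carries no $Q'_n$-component, rule (iv) does not fire, and the exceptional clause permits $S'=\emptyset$ here.

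For the inductive step I would assume $S=\{(q,s')\}\cup(\{q\}\times S')\cup(\{t\}\times T')$ has form (\ref{eq:reach}) and compute $S^a$ rule by rule. Rules (i) and (ii) send the unique $s'$-state to $(q_a,s')$ and to $(q_a,\delta'_n(0',a))$; rule (iii) sends each $(q,q'_k)$ with $q'_k\in S'$ to $(q_a,\delta'_n(q'_k,a))$; and rule (v) sends each $(t,q'_k)$ with $q'_k\in T'$ to $(t,\delta'_n(q'_k,a))$. The decisive structural point is that determinism of $\cD_m$ forces every new state outside the $t$-block to carry the single first component $q_a$, so before the $\eps$-closure the image is $\{(q_a,s')\}\cup(\{q_a\}\times S'_a)\cup(\{t\}\times\delta'_n(T',a))$ with $S'_a=\delta'_n(S'\cup\{0'\},a)$, which is nonempty and already of the shape (\ref{eq:reach}).

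It then remains to control the core–subcore relation, which is where rule (iv) and the finality of the selector interact; I expect this bookkeeping to be the main obstacle. From the hypothesis one always has $T'\subseteq S'$, hence $\delta'_n(T',a)\subseteq\delta'_n(S'\cup\{0'\},a)=S'_a$. If $q_a\notin F$, no $\eps$-edge fires, the new subcore is $\delta'_n(T',a)\subseteq S'_a$, and the inclusion $T'\subseteq S'$ is preserved; if $q_a\in F$, rule (iv) adjoins $(t,q')$ for every $q'\in S'_a$, so the subcore becomes $\delta'_n(T',a)\cup S'_a=S'_a$, giving exactly $T'=S'$ as demanded. This shows form (\ref{eq:reach}) is invariant, hence every reachable subset has that form.

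For the indistinguishability claim I would fix a reachable $S$ with selector $q\notin F$ and put $S^{+}=S\cup\{(q,0')\}$. Since $q\notin F$, no $\eps$-transition leaves $(q,0')$, and $(q,0')\notin F_\cN$, so $S$ and $S^{+}$ agree on acceptance of the empty word. The key observation is that one letter already collapses them: applying $a$ to $(q,0')$ can use only rule (iii), producing $(q_a,\delta'_n(0',a))$, which is precisely the state produced from $(q,s')\in S$ by rule (ii); as both undergo the same $\eps$-closure, $\{(q,0')\}^a\subseteq\{(q,s')\}^a\subseteq S^a$. Because $(\cdot)^a$ distributes over unions, $(S^{+})^a=S^a$, so $S$ and $S^{+}$ reach a common subset after any single letter and already agree on $\eps$; therefore no word distinguishes them, which completes the proof.
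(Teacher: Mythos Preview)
Your proof is correct and follows essentially the same approach as the paper: both establish the shape by induction on reachability and prove indistinguishability via the coincidence of rules~(ii) and~(iii) on $(q,s')$ and $(q,0')$. Your single unified induction maintaining the full invariant (including $T'\subseteq S'$, respectively $T'=S'$ when $q_a\in F$) is in fact slightly tidier than the paper's version, which proves the structural shape inductively but then establishes $T'\subseteq S'$ by a separate path-tracing argument.
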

\begin{proof}
First we show that every reachable subset $S \subseteq R$ is of the desired form.
We prove this claim by induction. The initial subset $\{(0,s')\}$ has this form.
Suppose that $S$ has this form, consider a~letter $a \in \Sigma$, and the subset $U = \eta(S,a)$.
Observe that $(\delta_m(q,a),s')$ is the only pair in $U$ containing $s'$, because of the transitions~(i) and because $\cD_m$ is deterministic.
Also, every state $(q,p')$,where $p' \in Q'_n \cup \{s'\}$, is mapped to a~state $(\delta_m(q,a),r') \in \{\delta_m(q,a)\} \times Q'_n$ by the transitions~(ii) and~(iii).
Finally, the states in $\{t\} \times T'$ are mapped only to states from $\{t\} \times Q'_n$ by the transitions~(iv) and~(v).

Note that subsets $S$ with $S' = \emptyset$ are not reachable, unless $S$ is the initial subset $\{(0,s')\}$.

We show that if $S=\{(q,s')\} \cup (\{q\} \times S') \cup (\{t\} \times T')$ is reachable, then $T' \subseteq S'$.
Let $r' \in T'$.
Then there exists a~word $xy$ such that:
$$(0,s) \xrightarrow{x} (q_1,p') \xrightarrow{\eps} (t,p') \xrightarrow{y} (t,r'),$$
where $q_1 \in F$.
We also have:
$$(q_1,p') \xrightarrow{y} (q_2,r').$$
Thus $(q_2,r') \in S$, and so $r' \in S'$. 

We observe that if $q \in F$, then by $\eps$-transitions (transitions~(iv)), every state $(q,r') \in S$ is mapped to $(t,r')$; thus $T' = S'$, which concludes the characterization of reachable subsets.

Finally, we show that if $q \notin F$, then $S$ cannot be distinguished from $S \cup \{(q,0')\}$.
Indeed, let $a \in \Sigma$ be any letter.
Then $\eta((q,0'),a) = \eta((q,s'),a)$ because the transitions~(iii) and~(ii) coincide.
Since $(q,s') \in S$, we have $\eta(S,a) = \eta(S \cup \{(q,0')\},a)$.\qed
\end{proof}

From Lemma~\ref{lem:subsets} two reachable subsets with a~different selector, or a~different core, or a~different subcore are potentially distinguishable.
If two reachable subsets have the same selector, core, and subcore, then they can differ only by state $(q,0')$ if the selector $q$ is not in $F$; thus they cannot be distinguished.
If two reachable subsets have the same selector $q$ that is in $F$, then they cannot differ just by $(q,0')$, as by $\epsilon$-transitions from $(q,0')$ we immediately obtain $(t,0')$.

\begin{theorem}\label{thm:upper_bound}
For $m \geq 2$ and $n \ge 1$, the state complexity of $L_m \overlap L_n$ is at most
$$2 (m-1) 3^{n-1} + 2^n.$$
\end{theorem}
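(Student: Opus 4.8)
The plan is to determinize $\cN$ and to count the reachable subsets up to the indistinguishability established in Lemma~\ref{lem:subsets}, since the minimal DFA accepting $L_m \overlap L_n$ is a quotient of the determinized automaton and hence no larger. By Lemma~\ref{lem:subsets} every reachable subset has the form~(\ref{eq:reach}); in particular it contains $(q,s')$, which always has an outgoing transition via rule~(i), so no reachable subset is empty, and the determinized DFA is complete with no dead state. Its size is therefore exactly the number of reachable subsets, and I would bound this number by grouping the subsets according to their selector $q \in Q_m$.

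First I would count the distinguishable subsets with a fixed selector $q$. If $q \in F$, the form~(\ref{eq:reach}) forces $T' = S'$, so the subset is completely determined by the set $S' \subseteq Q'_n$, giving at most $2^n$ subsets. If $q \notin F$, the subset is determined by its core $S' \setminus \{0'\}$ together with its subcore $T' \subseteq S'$. For each of the $n-1$ states of $Q'_n \setminus \{0'\}$ there are three mutually exclusive possibilities — it lies outside $S'$, inside $S'$ but outside $T'$, or inside both — contributing a factor $3^{n-1}$. The state $0'$ is special: by the last part of Lemma~\ref{lem:subsets}, adding or removing $(q,0')$ leaves the subset indistinguishable, so membership of $0'$ in $S'$ is irrelevant, whereas the presence of $(t,0')$ (that is, $0' \in T'$) is not controlled by the Lemma and must be allowed to vary, yielding only a factor $2$ for the $0'$ coordinate. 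Hence each non-final selector contributes at most $2\cdot 3^{n-1}$ distinguishable subsets.

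Summing over selectors, if $\cD_m$ has $h = |F|$ final states, the total number of distinguishable subsets is at most $(m-h)\,2\cdot 3^{n-1} + h\,2^n$. Viewing this as a function of $h$, its coefficient of $h$ is $2^n - 2\cdot 3^{n-1} = 2(2^{n-1} - 3^{n-1}) \le 0$, so the bound is non-increasing in $h$ and is largest at the smallest admissible value $h = 1$ (the case $h = 0$ forces $L_m = \emptyset$, whose overlap assembly has state complexity $1$, well within the bound). Substituting $h = 1$ gives $2(m-1)3^{n-1} + 2^n$, which is the claimed bound; since the minimal DFA is no larger than the determinized one, this bounds the state complexity of $L_m \overlap L_n$ for all admissible operands.

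The hard part will be the bookkeeping for the state $0'$: one must argue precisely why it contributes a factor $2$ rather than $3$ — namely that including $(q,0')$ is free by Lemma~\ref{lem:subsets}, while $(t,0')$ remains a genuine degree of freedom — and then check that the optimization over the number of final states $h$, using $2\cdot 3^{n-1}\ge 2^n$ (with equality only at $n=1$), indeed pins the worst case at a single final state, exactly reproducing the summands $2(m-1)3^{n-1}$ and $2^n$.
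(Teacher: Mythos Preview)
Your proposal is correct and follows essentially the same strategy as the paper: invoke Lemma~\ref{lem:subsets} to restrict to subsets of the form~(\ref{eq:reach}), count the distinguishable ones grouped by selector, and then optimize over $|F|$. The only differences are cosmetic: you obtain the per-selector count $2\cdot 3^{n-1}$ directly via the ``three options for each state of $Q'_n\setminus\{0'\}$, two options for $0'$'' argument, whereas the paper reaches the same number through the binomial sum $\sum_{k=1}^{n}\binom{n-1}{k-1}2^k$; and you absorb the initial subset into the per-selector counts (using $2^n$ rather than $2^n-1$ for final selectors) instead of adding a separate~$+1$, which coincides with the paper's formula exactly at the maximizing value $|F|=1$.
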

\begin{proof}
Using Lemma~\ref{lem:subsets}, we count the number of potentially reachable and distinguishable subsets
$S = \{(q,s')\} \cup (\{q\} \times S') \cup (\{t\} \times T')$.
\medskip

\noindent\textit{Reachable subsets}:
For every state $q \in Q_m$, we count the number of potentially reachable subsets with selector $q$.
There are 2 cases:
\begin{itemize}
\item If $q$ is non-final, we can choose any non-empty set $S' \subseteq Q_n'$ of cardinality $k$ and any subset $T'$ of $S'$.
The number of ways of doing this is $\sum_{k=1}^n \binom{n}{k} 2^k$.
\item If $q$ is final, again we choose any non-empty set $S'$, but now $T'=S'$ is fixed.
The number of ways of doing this is $2^n - 1$.
\end{itemize}
There is also the initial subset $\{(0,s')\}$ which contributes $1$ to the sum.
In total, this yields:
$$(m-|F|)\cdot \left(\sum_{k=1}^{n} \binom{n}{k} 2^k\right) + |F|\cdot(2^n-1) + 1.$$

\noindent\textit{Distinguishable subsets}:
The above formula gives the number of potentially reachable subsets but overestimates the state complexity because not all subsets are distinguishable.
Recall that by Lemma~\ref{lem:subsets} if the selector $q$ is not in $F$, then $S$ cannot be distinguished from $S \cup \{(q,0')\}$.
Thus we do not need to count subsets $S$ without $0'$, as $S \cup \{(q,0')\}$ is potentially reachable and always equivalent to $S$.
Hence, for a~given $q \in Q_m \setminus F$ we choose $S'$ to be any subset of $Q'_n$ that contains $0'$, and again let $T'$ be any subset of $S'$.
This can be done in $\sum_{k=1}^{n} \binom{n-1}{k-1} 2^k$ ways.
Thus the total number of potentially reachable and distinguishable subsets is at most
$$(m-|F|)\cdot\left(\sum_{k=1}^{n} \binom{n-1}{k-1} 2^{k}\right) + |F|\cdot(2^n-1) + 1.$$
By algebra, we have $\sum_{k=1}^{n} \binom{n-1}{k-1} 2^{k} = 2 \cdot 3^{n-1}$, which is greater than $2^n - 1$; so this formula is maximized when $|F| = 1$, and we conclude that the maximum state complexity of overlap assembly is $2 (m-1) 3^{n-1} + 2^n$.\qed
\end{proof}

\begin{theorem}\label{thm:min_alphabet}
At least $n$ letters are required to meet the bound from Theorem~\ref{thm:upper_bound}.
\end{theorem}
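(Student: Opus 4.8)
The plan is to establish the bound on the alphabet by a contrapositive-style argument: I will show that \emph{attaining} the value $2(m-1)3^{n-1}+2^n$ forces a surjectivity property of $\cD'_n$ at its start state, and surjectivity onto an $n$-element set cannot be achieved with fewer than $n$ letters. The first observation is that the bound can only be met when $|F|=1$. Indeed, in the proof of Theorem~\ref{thm:upper_bound} the per-selector contribution is $2\cdot 3^{n-1}$ for a non-final selector and only $2^n-1$ for a final one, and $2^n-1<2\cdot 3^{n-1}$; hence any additional final state in $\cD_m$ strictly lowers the count. Fix the unique final state $f$. Meeting the bound requires in particular that all $2^n-1$ final-selector subsets $\{(f,s')\}\cup(\{f\}\times S')\cup(\{t\}\times S')$ with $\emptyset\neq S'\subseteq Q'_n$ are reachable in the determinized automaton; in particular every singleton core $S'=\{i'\}$, $i'\in Q'_n$, must be reachable.

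The key step is a closed-form description of the core of a reachable subset. Tracing rules (i)--(iii) and using the fact (from Lemma~\ref{lem:subsets}) that the state $(q,s')$ with the current selector is always present, reading a letter $a$ sends a core $S'$ to $\delta'_n(S'\cup\{0'\},a)$: rule (ii) re-injects $\delta'_n(0',a)$ at every step while rule (iii) advances the rest of the core. Unrolling this recurrence from the empty core shows that after reading a word $z=a_1\cdots a_\ell$ the core is exactly
\[
S'(z)=\{\,\delta'_n(0',\,a_i a_{i+1}\cdots a_\ell) : 1\le i\le \ell\,\},
\]
the set of states of $\cD'_n$ reachable from $0'$ along the nonempty suffixes of $z$. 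This identity is independent of the selector, so it applies verbatim to the final-selector subsets identified above.

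The conclusion is then immediate. If a singleton core $S'(z)=\{i'\}$ is reachable, the length-one suffix $y=a_\ell$ forces $\delta'_n(0',a_\ell)=i'$; thus for each $i'\in Q'_n$ there is a letter $a$ with $\delta'_n(0',a)=i'$. In other words the map $\Sigma\to Q'_n$ given by $a\mapsto\delta'_n(0',a)$ is surjective, whence $|\Sigma|\ge|Q'_n|=n$. Since attaining the bound forces all $n$ singleton cores to be reachable, no alphabet of size smaller than $n$ can meet it.

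The main obstacle is making the singleton requirement genuinely a consequence of \emph{meeting} the bound rather than of some particular witness construction, and in establishing the core formula itself. I circumvent the first difficulty by leaning only on reachability, which is guaranteed once the bound is attained: reachability of the final-selector state with core $\{i'\}$ supplies a word $z$ with $S'(z)=\{i'\}$, and the suffix computation extracts the required letter without my having to control how $z$ drives $\cD_m$ to $f$. A secondary point to check carefully is that the $2^n$ term really enumerates \emph{all} nonempty cores for the single final selector, so that every singleton (including those avoiding $0'$) is demanded; this is exactly what the $|F|=1$ reduction from Theorem~\ref{thm:upper_bound} delivers.
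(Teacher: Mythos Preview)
Your proof is correct and follows essentially the same route as the paper's: both arguments hinge on the fact that, to meet the bound, every subset of the form $\{(f,s'),(f,p'),(t,p')\}$ with $p'\in Q'_n$ must be reachable, and that the last letter of any word reaching such a subset must send $0'$ to $p'$ in $\cD'_n$ via rule~(ii), forcing the map $a\mapsto\delta'_n(0',a)$ to be onto $Q'_n$. Your explicit suffix formula for $S'(z)$ and the preliminary reduction to $|F|=1$ make the argument a bit more self-contained than the paper's, but the underlying idea is identical.
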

\begin{proof}
Let $q \in F$ be a~final state of $\cD_m$.
For each $p' \in Q'_n$ we consider the subset
$$T_{p'} = \{(q,s'),(q,p'),(t,p')\}.$$
If the upper bound is met, then, in particular, all subsets $S$ with $q \in F$ must be reachable in view of
Lemma~\ref{lem:subsets}.
These subsets were counted in the upper bound, and there are no other subsets of reachable form that could be equivalent to them when the upper bound is met.
Hence, in particular, all subsets $T_{p'}$ must be reachable.

Suppose that $T_{p'}$ is reachable by a~word $w_{p'} a_{p'}$, for some letter $a_{p'}$.
Note that $(q,p')$ is the only one of the three states in $T_{p'}$ that can be reached by transitions~(ii) of the NFA.
Consider $\eta(r_0,w_{p'})$;
it must contain $(r,s')$ for some $r \in Q_m$, because by Lemma~\ref{lem:subsets} every reachable subset has exactly one such pair. 
Thus, $(r,s')$ must be mapped by transitions~(ii) induced by $a_{p'}$ to $(q,p')$.
Therefore, $\delta'_n(0',a_{p'}) = p'$, which proves that $a_{p'}$ are different for every $p'$.\qed
\end{proof}

We define the witness DFAs for $m,n \ge 2$.
Let $\Sigma = \{a_0,\ldots,a_{n-1}\}$.
Let $\cW_m=(Q_m, \Sigma, \delta_m, 0, F)$ be defined as follows:
\begin{itemize}
\item $F = \{0\}$;
\item $a_i\colon \mathbf{1}_m$ for $i \in \{0,2,\ldots,n-1\}$, where $\mathbf{1}_m$ is the identity transformation on $Q_m$;
\item $a_1\colon (0,1,\ldots,m-1)$ is a~cyclic permutation of $Q_m$.
\end{itemize}
Let $\cW'_n=(Q'_n, \Sigma, \delta'_n, 0', F')$ be defined as follows:
\begin{itemize}
\item $F = \{(n-1)'\}$;
\item $a_0\colon (Q'_n \to 0')$ maps all the states of $Q'_n$ to $0'$;
\item $a_i\colon (1',2',3',\ldots,(i-1)',0',i',\ldots,(n-1)')$ for $i \in \{1,\ldots,n-1\}$.
 Here $a_i$ permutes the states of $Q'_n$, mapping $1'$ to $2'$, $2'$ to $3'$, etc., then $(i-1)'$ to $0'$, $0'$ to $i'$, and then $i'$ to $(i+1)'$, etc., and $(n-1)'$ to $1'$.
\end{itemize}
 The transitions of these DFAs with $m=3$ and $n=4$ states are illustrated in Figure~\ref{fig:witness3x4}.
Let $L_m$ and $L'_n$ be the languages of $\cW_m$ and $\cW'_n$, respectively.

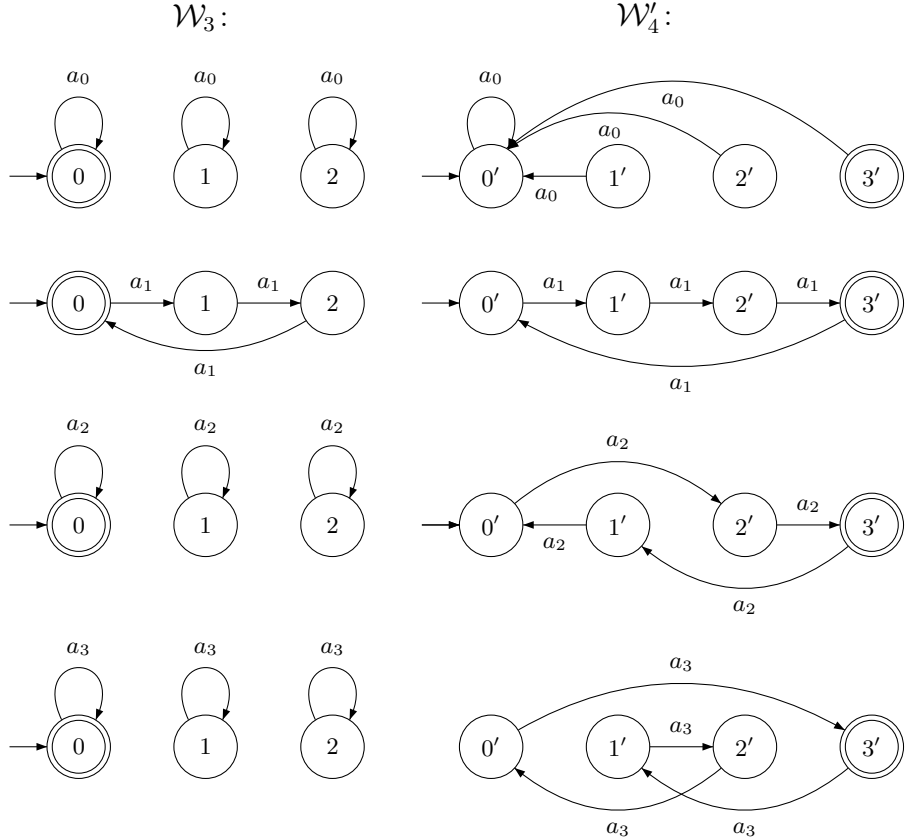
\begin{figure}[th]
\unitlength 12pt
\begin{center}\begin{picture}(26,27)(-2,-3)
\gasset{Nh=2,Nw=2,Nmr=1.25,ELdist=0.4,loopdiam=1.5}
\node[Nframe=n](name1)(4,23){\large$\cW_3\colon$}
\node[Nframe=n](name2)(18,23){\large$\cW'_4\colon$}
\node(0a0)(0,18){$0$}\rmark(0a0)\imark(0a0)
\node(1a0)(4,18){$1$}
\node(2a0)(8,18){$2$}
\drawloop(0a0){$a_0$}
\drawloop(1a0){$a_0$}
\drawloop(2a0){$a_0$}
\node(0a1)(0,14){$0$}\rmark(0a1)\imark(0a1)
\node(1a1)(4,14){$1$}
\node(2a1)(8,14){$2$}
\drawedge(0a1,1a1){$a_1$}
\drawedge(1a1,2a1){$a_1$}
\drawedge[curvedepth=1.5](2a1,0a1){$a_1$}
\node(0a2)(0,7){$0$}\rmark(0a2)\imark(0a2)
\node(1a2)(4,7){$1$}
\node(2a2)(8,7){$2$}
\drawloop(0a2){$a_2$}
\drawloop(1a2){$a_2$}
\drawloop(2a2){$a_2$}
\node(0a3)(0,0){$0$}\rmark(0a3)\imark(0a3)
\node(1a3)(4,0){$1$}
\node(2a3)(8,0){$2$}
\drawloop(0a3){$a_3$}
\drawloop(1a3){$a_3$}
\drawloop(2a3){$a_3$}

\node(0'a0)(13,18){$0'$}\imark(0'a0)
\node(1'a0)(17,18){$1'$}
\node(2'a0)(21,18){$2'$}
\node(3'a0)(25,18){$3'$}\rmark(3'a0)
\drawloop[loopangle=90](0'a0){$a_0$}
\drawedge[exo=-.5](1'a0,0'a0){$a_0$}
\drawedge[curvedepth=-2,exo=-.5](2'a0,0'a0){$a_0$}
\drawedge[curvedepth=-3,exo=-.5](3'a0,0'a0){$a_0$}

\node(0'a1)(13,14){$0'$}\imark(0'a1)
\node(1'a1)(17,14){$1'$}
\node(2'a1)(21,14){$2'$}
\node(3'a1)(25,14){$3'$}\rmark(3'a1)
\drawedge(0'a1,1'a1){$a_1$}
\drawedge(1'a1,2'a1){$a_1$}
\drawedge(2'a1,3'a1){$a_1$}
\drawedge[curvedepth=2](3'a1,0'a1){$a_1$}

\node(0'a2)(13,7){$0'$}\imark(0'a2)
\node(1'a2)(17,7){$1'$}
\node(2'a2)(21,7){$2'$}
\node(3'a2)(25,7){$3'$}\rmark(3'a2)
\drawedge[curvedepth=2](3'a2,1'a2){$a_2$}
\drawedge(1'a2,0'a2){$a_2$}
\drawedge[curvedepth=2](0'a2,2'a2){$a_2$}
\drawedge(2'a2,3'a2){$a_2$}

\node(0'a3)(13,0){$0'$}\imark(0'a2)
\node(1'a3)(17,0){$1'$}
\node(2'a3)(21,0){$2'$}
\node(3'a3)(25,0){$3'$}\rmark(3'a3)
\drawedge[curvedepth=2](3'a3,1'a3){$a_3$}
\drawedge(1'a3,2'a3){$a_3$}
\drawedge[curvedepth=2](2'a3,0'a3){$a_3$}
\drawedge[curvedepth=2](0'a3,3'a3){$a_3$}
\end{picture}\end{center}
\caption{The actions of the letters in $\cW_3$ and $\cW'_4$.}\label{fig:witness3x4}
\end{figure}

By a~\emph{cyclic shift} of a~core subset $S' \subseteq \{1',\ldots,(n-1)'\}$ we understand any subset obtained by shifting the states along the cycle $(1',\ldots,(n-1)')$, $i$ positions clockwise, i.e.,\ the subset $\{(((p-1+i) \bmod (n-1))+1)' \mid p' \in S'\}$ for any $i \ge 0$.
The \emph{next} and \emph{previous} cyclic shifts correspond to $i=1$ and $i=n-2$, respectively.

The transitions of letters $a_1,a_2,\ldots,a_{n-1}$ produce next cyclic shifts of the states in $\{1',\ldots,(n-1)'\}$, with the exception that state $0'$ replaces one of the states in the cycle.
The idea behind the witness is that we can add an~arbitrary state to the core using these letters and produce arbitrary cyclic shifts as well, as will be shown later.
Letter $a_0$ plays an~important role of reset, which is necessary to reach small subsets.
The main difficulty is that $a_1$ shares both roles of producing cyclic shifts and switching the selector.

\begin{theorem}
\label{thm:meets_upper_bound}
For $m \ge 2$ and $n \ge 3$, $L_m \overlap L'_n$ meets the upper bound.
\end{theorem}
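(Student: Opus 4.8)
The plan is to prove, separately, reachability and distinguishability of the $2(m-1)3^{n-1}+2^n$ subsets counted in the proof of Theorem~\ref{thm:upper_bound}, in the determinized DFA $\cD$ of the NFA built from $\cW_m$ and $\cW'_n$. I record three facts used throughout. Since $F'=\{(n-1)'\}$, a reachable subset $S=\{(q,s')\}\cup(\{q\}\times S')\cup(\{t\}\times T')$ is accepting iff $(n-1)'\in T'$, so acceptance depends only on the subcore. Reading $a$ takes $S$ to the subset with selector $\delta_m(q,a)$, core $\{\delta'_n(0',a)\}\cup\delta'_n(S',a)$, and subcore $\delta'_n(T',a)$ if the new selector is non-final, but subcore equal to the new core if it is final (the $\eps$-transitions~(iv) copy the core onto the subcore). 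In $\cW'_n$ every letter $a_i$ with $i\ge 1$ acts on $Q'_n$ as a single $n$-cycle, $a_1$ in addition advancing the selector by $1\bmod m$, while each $a_i$ with $i\ge2$ fixes the selector and merely inserts $i'$ into the core, and $a_0$ resets the second coordinate to $0'$.

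For reachability I would work with the $3^{n-1}$-description of the count: for a non-final selector, each of $1',\dots,(n-1)'$ is absent, in the core only, or in the subcore, and $0'$ is in the subcore or not, giving $2\cdot 3^{n-1}$ targets; the final selector $0$ gives the $2^n-1$ subsets with $T'=S'$. The key structural remark is that a state enters the subcore only through a snapshot: the subcore grows only when $a_1$ carries the selector onto $0$, at which instant rule~(iv) sets $T':=S'$. My plan is thus to realize a target by assembling, at selector $0$, a core whose states are exactly those destined for the subcore, recording them by the snapshot, then applying $a_1$ to freeze the subcore on a non-final selector, and finally using the $a_i$ ($i\ge2$) to enlarge the core-only part and further $a_1$'s to reach the desired selector; the reset $a_0$ and an induction on the number of states already placed drive the construction. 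Lemma~\ref{lem:subsets} lets me ignore membership of $0'$ in the core, since a non-final subset is equivalent to its $0'$-augmented version.

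For distinguishability everything reduces to separating two subsets with non-final selectors whose subcores differ: picking a state $p'$ in the symmetric difference and applying a power of the $n$-cycle $a_2$ — which fixes the selector and permutes the subcore bijectively — so that $p'$ is carried to $(n-1)'$, exactly one image then being accepting. It remains to reach this situation. If two subsets differ already in the subcore, a single $a_1$ (if needed) moves them to non-final selectors while keeping the subcores distinct. If they share the subcore but differ in the core, I drive the common selector to $0$ by a power of $a_1$, so that the snapshot copies the differing cores onto the subcores, and then proceed as before. If they differ in the selector, say $q_1\ne q_2$, the power $a_1^{k}$ with $k\equiv-q_1\bmod m$ makes the first selector final — snapshotting its core onto its subcore — while the second stays non-final, which creates a subcore difference. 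The initial subset is the unique reachable subset with empty core, hence is distinguished from all others.

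The main obstacle is the reachability of the non-final configurations, for the reason highlighted just before the theorem: $a_1$ cannot advance the selector, nor induce the snapshot that creates subcore elements, without simultaneously $n$-cycling the whole second coordinate and inserting $1'$ into the core. Consequently the selector-advancing and snapshot-inducing $a_1$'s must be interleaved with the core-editing $a_i$'s so that the accumulated rotation deposits the subcore and the core-only states on exactly the intended positions, and this must be arranged uniformly for all $2(m-1)3^{n-1}$ non-final targets. I expect to resolve this by arguing in the tri-state picture that the condition of each of the $n-1$ cyclic states can be set independently, using that the $n$ available letters — the minimum permitted by Theorem~\ref{thm:min_alphabet} — supply, for $n\ge3$, enough distinct insertion positions to decouple these contributions from the forced rotation.
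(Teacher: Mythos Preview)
Your distinguishability plan has a genuine gap in the case of equal non-final selector $q$, equal subcore, and different cores. You propose to drive the selector to $0$ by $a_1^{m-q}$ so that the snapshot records the core difference in the subcores. But each application of $a_1$ inserts $1'$ into both cores and then cycles; after $k$ steps both cores contain the $k$ consecutive states $1',2',\dots$ (indices mod $n$), and whenever the witness of difference sits at $(n-1)'$ a single $a_1$ already sends it to $0'$, where it becomes invisible. Concretely, take $m=5$, $n=3$, $q=1$, $T'_1=T'_2=\emptyset$, $S'_1=\{0',1'\}$, $S'_2=\{0',2'\}$: after $a_1^{3}$ both subsets equal $\{(4,s')\}\cup(\{4\}\times Q'_3)$, so the remaining $a_1$ to reach selector $0$ cannot separate them, yet the upper-bound count requires them to be inequivalent. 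The paper avoids this by a reverse induction on the selector: it passes from $q$ to $q+1$ with a \emph{single} $a_1$, and whenever the current witness $p'$ of core difference equals $(n-1)'$ it first applies $a_2$, which relocates the witness to $1'$ and prevents its absorption into $0'$. This preserves the core difference until selector $m-1$, whence one more $a_1$ lands on $0$ and the snapshot yields distinct subcores. Your different-selector case has the analogous weakness: the assertion ``which creates a subcore difference'' is unjustified and can fail for the same absorption reason; the paper instead applies $a_1^{m-q_2}a_0a_{n-1}^2$, using the reset $a_0$ to force a clean separation.

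Your reachability outline is closer in spirit to the paper's, and the obstacle you flag in the last paragraph is exactly the one the paper resolves explicitly. The paper first builds, from the initial subset and using only letters in $\{a_1,\dots,a_{n-1}\}$, a subset with the target selector and target core but empty subcore; since that word acts on $\{t\}\times Q'_n$ as some permutation $\pi$, applying the \emph{same} word to a selector-$0$ subset with subcore $\pi^{-1}(T')$ transports the subcore to $T'$ while simultaneously assembling the core. Your plan stops short of this permutation-inversion step, which is the missing ingredient that decouples the forced rotations from the target configuration.
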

\begin{proof}
\noindent\textit{Reachability}:
It is enough to show that all subsets $S$ from Lemma~\ref{lem:subsets} are reachable, with the exception that if $q \notin F$ then it suffices to show reachability of either $S \setminus \{(q,0')\}$ or $S \cup \{(q,0')\}$.

\noindent$\bullet$ First we show that for all subsets
$$S = \{(q,s')\} \cup (\{q\} \times S'),$$
where $q \in Q_m \setminus \{0\}$ and $\emptyset \neq S' \subseteq Q_n' \setminus \{0'\}$, either $S \setminus \{(q,0')\}$ or $S \cup \{(q,0')\}$ is reachable.
These subsets have core $S'$ and an~empty subcore.

We prove this by induction on the size $|S'|$ of the core.
For $|S'|=0$, apply $a_1^q a_0$ to $(0,s')$; this yields $\{(q,s'),(q,0')\}$.

Consider $|S'|=1$.
If $q=1$, then we just use $a_1$, which yields $\{(1,s'),(1,1')\}$.
To meet the other subsets $\{(1,s'),(1,p')\}$ for $p \ge 2$, from $\{(1,s'),(1,1')\}$ we use $a_0 a_p$.
For $q \ge 2$, we use $a_1^{q-1} a_0 a_1$, which yields $\{(q,s'),(q,1')\}$.
Then to meet the other subsets $\{(q,s'),(q,p')\}$ for $p \ge 2$, from $\{(q,s'),(q,1')\}$ we also use $a_0 a_p$.

Consider $|S'|\ge 2$ and assume the induction hypothesis for subsets $S$ with a~smaller core.
Since $S'$ contains at least two states different from $0'$, there is a~state $p' \in S' \setminus \{1'\}$.
Let $X'$ be the previous cyclic shift of $S' \setminus \{p'\}$.
Since $p' \notin S' \setminus \{p'\}$, $X'$ does not contain $(p-1)'$, but this is its only difference from the previous cyclic shift of $S'$.
By the inductive assumption, $\{(q,s')\} \cup (\{q\} \times X')$ is reachable.
We apply $a_p$ to this subset, which maps $X'$ to its next cyclic shift, and also $(q,s')$ to $(q,p')$, which yields $\{(q,s'\} \cup (\{q\} \times S')$.

\noindent$\bullet$ Now we show reachability of subsets
$$S = \{(0,s')\} \cup (\{0\} \times S') \cup (\{t\} \times S'),$$
where $\emptyset \neq S' \subseteq Q_n'$.
These are all potentially reachable subsets with selector $0$.

First consider the case $0' \notin S'$.
For $\{(m-1,s'),(m-1,1')\}$ we apply $a_0 a_1$, which yields $\{(0,s'),(0,1'),(t,1')\}$.
Then we continue the induction on $|S'|$ as before when $|S'| \ge 2$, with just $\{t\} \times S'$ added to the subsets.

Now consider the case $0' \in S'$.
The case $S'=\{0'\}$ is easily covered by applying $a_0$ to $\{(0,s'),(0,1'),(t,1')\}$.
If $S' = \{0',1'\}$, then from $\{(m-1,s'),(m-1,(n-1)')\}$ we apply $a_1$ and get $\{(0,s'),(0,0'),(0,1'),(t,0'),(t,1')\}$ as desired.
Let $S' \neq \{0',1'\}$.
We already know that $\{(0,s')\} \cup (\{0,t\} \times X')$ is reachable, where $X'$ is the previous cyclic shift of $S' \setminus \{0'\}$.
Since $|S'|\ge 2$ and $S' \neq \{0',1'\}$, there is a~$p' \in S' \setminus \{1'\}$.
We apply $a_p$ to $\{(0,s')\} \cup (\{0,t\} \times X')$.
We have $X' \setminus \{(p-1)'\}$ mapped to $S' \setminus \{p'\}$ and $(p-1)'$ mapped to $0'$, which gives $(\{0\} \times (S' \cup \{0'\} \setminus \{p'\})$ by transitions~(iii), and $(0,p')$ is added by transitions~(ii).
Thus, after completing by $\eps$-transitions this yields $\{(0,s')\} \cup (\{0,t\} \times S')$.

\noindent$\bullet$ Finally, we show that for all subsets
$$S = \{(q,s')\} \cup (\{q\} \times S') \cup (\{t\} \times T'),$$
where $q \neq 0$ and $\emptyset \neq T' \subseteq S' \subseteq Q'_n$,
either $S \setminus \{(q,0')\}$ or $S \cup \{(q,0')\}$ is reachable.

Consider the special case $S'=T'=\{0'\}$.
We reach it from $\{(0,s'),(0,0'),(t,0')\}$ by applying $a_1^q a_0$.
For the rest, assume that $S' \setminus \{0'\}$ is non-empty.

We need an~auxiliary argument that from $\{(0,s')\}$ we can reach
a subset with selector $q$, core $S'$, and an~empty subcore,
using a~word from $\{a_1,a_2,\ldots,a_{n-1}\}^*$ (any word without $a_0$).
We prove this by induction on the core size $|S' \setminus \{0'\}|$.
For $|S' \setminus \{0'\}|=1$, at the beginning we use $a_1$, which yields $\{(1,s'),(1,1')\}$.
Now we can reach $\{(1,s'),(1,0'),(1,p')\}$ for any $p' \in \{2',\ldots,(n-1)'\}$ by using $a_2 a_3 \dots a_p$.
Then, from $\{(1,s'),(1,0'),(1,(n-1)')\}$ we reach $\{(2,s'),(2,0'),(2,1')\}$,
and it remains to repeat the argument to reach every remaining subset of the form $\{(q,s'),(q,0'),(q,p')\}$ for $q \in Q_m \setminus \{0,1\}$ and $p' \in Q'_n \setminus \{0'\}$.
For $|S' \setminus \{0'\}|\ge 2$ we follow the first part of the reachability argument as before, but we reach either $\{(q,s')\} \cup (\{q\} \times (S' \setminus \{0'\})$ or $\{(q,s')\} \cup (\{q\} \times (S' \cup \{0'\}))$, instead of just the former.
Let  $w \in \{a_1,a_2,\ldots,a_{n-1}\}^*$ be a~word that reaches either $\{(q,s')\} \cup (\{q\} \times (S' \setminus \{0'\})$ or $\{(q,s')\} \cup (\{q\} \times (S' \cup \{0'\}))$.

Suppose that we start from the subset
$$S_0 = \{(0,s')\} \cup (\{0,t\} \times T'_0),$$
where $T'_0$ is some subset such that $\emptyset \neq T'_0 \subseteq Q'_n$.
We already know that for every $T'_0$, subset $S_0$ is reachable.
After applying $a_1 w$, we reach either
$$S_q = \{(q,s')\} \cup (\{q\} \times (S' \cup T'_q \setminus \{0'\})) \cup (\{t\} \times T'_q),$$
or $S_q \cup \{(q,0')\}$, where $T'_q$ is obtained by applying some permutation $\pi$ of $Q'_n$ to $T'_0$.
This is because $\{(0,s')\}$ is mapped by $a_1 w$ to $\{(q,s')\} \cup (\{q\} \times (S' \setminus \{0'\})$ or $\{(q,s')\} \cup (\{q\} \times (S' \cup \{0'\}))$, word $a_1 w$ acts as a~permutation on $(\{t\} \times Q'_q)$, and $\{0\} \times T'_0$ is mapped to $(\{q\} \times T'_q)$.
Note that $a_1 w$ does not depend on $T'_0$, so we can choose $T'_0$ arbitrarily.
Let $T'_0=\pi^{-1}(T')$, so $\pi(T'_0) = T'$.
We obtain either
$$S_q = \{(q,s')\} \cup (\{q\} \times ((S' \setminus \{0'\}) \cup T') \cup (\{t\} \times T'),$$
or
$$S_q = \{(q,s')\} \cup (\{q\} \times ((S' \cup \{0'\}) \cup T') \cup (\{t\} \times T').$$
Recall that $T' \subseteq S'$ and if $0' \in T$, then also $0' \in S'$; hence $(S' \setminus \{0'\}) \cup T'$ is either $S'$ or $S' \setminus \{0'\}$, and $(S' \cup \{0'\}) \cup T' = S' \cup \{0'\}$.
Thus, $S_q$ is either $S \setminus \{(q,0')\}$ or $S \cup \{(q,0')\}$.
\medskip

\noindent\textit{Distinguishability}:
Consider two reachable subsets
$$S_1 = \{(q_1,s')\} \cup (\{q_1\} \times S'_1) \cup (\{t\} \times T'_1),$$
and
$$S_2 = \{(q_2,s')\} \cup (\{q_2\} \times S'_2) \cup (\{t\} \times T'_2),$$
with different selectors, different cores, or different subcores.
Thus we have $q_1 \neq q_2$, or $T'_1 \neq T'_2$, or $(S'_1 \setminus \{(q_1,0')\}) \neq (S'_2 \setminus \{(q_2,0'\})$.
These are precisely all\ the reachable and potentially distinguishable subsets in view of Lemma~\ref{lem:subsets}.
Note that the initial subset also has this form, where $q_1=0$ and $S'_1$ and $T'_1$ are empty.

If $q_1 \neq q_2$, then without loss of generality let $q_1 < q_2$.
We apply $a_1^{m-q_2} a_0 a_{n-1}^2$.
For $S_1$, first $a_1^{m-q_2} a_0$ maps it to a~subset $\{(q,s'),(0,s')\}$ or $\{(q,s'),(q,0'),(t,0')\}$ (if $T'_1$ is non-empty) for some $q \neq 0$.
Then $a_{n-1}^2$ results in a~subset that from the states from $(\{t\} \times Q'_n)$ contains at most $(t,1')$, which is not final.
On the other hand, $S_2$ by $a_1^{m-q_2} a_0$ is mapped to $\{(0,s'),(0,0'),(t,0')\}$.
Then $a_{n_1}^2$ yields $\{(0,s'),(0,0'),(t,1'),(t,(n-1)')\}$, where $(t,(n-1)')$ is final.

So suppose that $q_1 = q_2$.
If $q_1 \neq 0$ and $T'_1 \neq T'_2$, then we apply $a_{n-1}^i$ for a~suitable $i \ge 0$.
Since $a_{n-1}$ acts cyclically on all states $(\{t\} \times Q'_n)$ and no other states from the subsets are mapped to $(\{t\} \times Q'_n)$, we can repeat the cycle so that exactly one of $\eta(\{t\} \times T'_1,a_{n-1}^i)$ and $\eta(\{t\} \times T'_2,a_{n-1}^i)$ contains the final state $(t,(n-1)')$.
If $q_1 = 0$ and $T'_1 \neq T'_2$, then also $S'_1 \neq S'_2$, so it remains to cover this case.

Suppose that $S'_1 \neq S'_2$.
If $q_1 = q_2 = 0$, then also $T'_1 \neq T'_2$.
We apply $a_1$, which maps $S_1$ to the subset
$$\{(1,s')\} \cup (\{1\} \times (\delta_m(S'_1,a_1) \cup \{2'\})) \cup (\{t\} \times \delta'_n(T'_1,a_1)),$$
and analogously $S_2$.
Since $T'_1 \neq T'_2$ and $a_1$ acts cyclically on $Q'_n$, we have $\delta'_n(T'_1,a_1) \neq \delta'_n(T'_1,a_1)$.
The case of these subsets has been already covered in the previous paragraph.

There remains the case where $T'_1 = T'_2$, $S'_1 \neq S'_2$, $q_1 = q_2 \neq 0$.
We follow the induction on the selector $q_1$ starting with $q_1=m-1$ and decreasing it.
We will show for $q_1=m-1$ that we can reach subsets with selector $0$ that still have different cores.
We have already shown in the previous paragraph that the subsets with selector $0$ and different cores can be distinguished.
For $q_1<m-1$ we will show that we can reach subsets with the same property but with selector $q_1+1$, which will follow by the inductive assumption.
So let $p$ be the largest index such that, without loss of generality, $p' \in S'_1$ and $p' \notin S'_2$.
Note that $p \neq 0$, because then the subsets cannot be distinguished.
If $p < n-1$, then we apply $a_1$, which yields subsets with the desired property.
If $p = n-1$, then we first apply $a_2$, which yields the subset with $p'=1'$, and then we can apply $a_1$ as before.\qed
\end{proof}

\section{Unary Alphabet}
\label{sec:unary}

In this section, we consider overlap assembly of languages over a~one-letter alphabet.
First note that if the longest word that is in a~unary language $L$ is of length $n$, then the state complexity of $L$ is exactly $n+2$.
Similarly, if the longest word that is \emph{not} in a~unary language $L$ is of length $n$, then the state complexity of $L$ is exactly $n+2$~\cite{YZS94}.

\begin{theorem}
\label{thm:unary}
Let $m,n \geq 1$, and let $L_m$ and $L_n$ be two unary languages of state complexities $m$ and $n$, respectively. The state complexity of $L_m \overlap L_n$ is at most $m+n$, and this bound is met by
$L_m = \{a^{mk+n-1} \mid k \in \mathbb{Z}, mk+n-1 \geq 0\} $ and
$L_n = \{a^{nk+m-1} \mid k \in \mathbb{Z}, nk+m-1 \geq 0\}.$
\end{theorem}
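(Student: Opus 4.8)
The plan is to reduce everything to exponent sets. Write $A=\{i : a^i\in L_m\}$ and $B=\{i : a^i\in L_n\}$; both are eventually periodic, say with preperiods $t_A,t_B$ and minimal periods $c_A,c_B$, where $t_A+c_A=m$ and $t_B+c_B=n$. Directly from the definition of $\overlap$, for $p,q\ge 1$ one has $a^p\overlap a^q=\{a^j : \max(p,q)\le j\le p+q-1\}$, while the overlap is empty whenever $\min(p,q)=0$ (the overlap factor $v$ must be nonempty). Hence the exponent set $C$ of $L_m\overlap L_n$ is the union of the integer intervals $[\max(p,q),\,p+q-1]$ over all $p\in A$, $q\in B$ with $p,q\ge 1$. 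Choosing $p,q$ as large as possible gives the convenient criterion: writing $P(j)=\max(A\cap\{1,\dots,j\})$ and $Q(j)=\max(B\cap\{1,\dots,j\})$, we have $j\in C$ iff both $P(j),Q(j)$ exist and $P(j)+Q(j)\ge j+1$.

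For the upper bound I would argue by cases on finiteness. If $A$ or $B$ has no positive element then $C=\emptyset$ and the state complexity is $1$. If both $A$ and $B$ are infinite, each meets every window of length equal to its period once we are past the preperiod, so $j-P(j)\le c_A-1$ and $j-Q(j)\le c_B-1$ for $j\ge\max(m,n)-1$; substituting into the criterion gives $P(j)+Q(j)\ge 2j-c_A-c_B+2\ge j+1$ as soon as $j\ge c_A+c_B-1$, and since both thresholds are at most $m+n-1$ we conclude $j\in C$ for every $j\ge m+n-1$. Thus $C$ is cofinite and its longest non-member word has length at most $m+n-2$, so by the unary relation recalled above its state complexity is at most $(m+n-2)+2=m+n$. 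If instead at least one operand, say $A$, is finite, then its largest element is $p^{*}=m-2$ by the finite-language relation, $P(j)=p^{*}$ for $j\ge p^{*}$, and the criterion reduces to the periodic condition ``$B$ meets $\{j-p^{*}+1,\dots,j\}$''; this yields a preperiod-plus-period (or, if $B$ is also finite, a finite) description of $C$ of total size strictly below $m+n$. In every case the state complexity is at most $m+n$.

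For the matching lower bound, the witnesses have exponent sets $A=\{i\ge0 : i\equiv n-1 \pmod{m}\}$ and $B=\{i\ge0 : i\equiv m-1 \pmod{n}\}$. Each is a single residue class, hence purely periodic with minimal period $m$ (respectively $n$) and preperiod $0$, confirming that the operands have state complexities $m$ and $n$. Both sets are infinite, so the upper-bound analysis already shows $j\in C$ for all $j\ge m+n-1$; it therefore suffices to prove $m+n-2\notin C$, which by the criterion amounts to $P(m+n-2)+Q(m+n-2)\le m+n-2$. For $m,n\ge2$ the largest element of $A$ not exceeding $m+n-2$ and congruent to $n-1$ is $n-1$ itself (the next candidate, $m+n-1$, is too large), so $P(m+n-2)=n-1$, and symmetrically $Q(m+n-2)=m-1$; their sum is exactly $m+n-2$, whence $m+n-2\notin C$. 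Consequently the longest non-member of $C$ has length exactly $m+n-2$ and the state complexity is $m+n$. The degenerate parameters $m=1$ or $n=1$ are verified directly and also attain $m+n$.

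The main obstacle is the upper bound rather than the witness: the naive eventual period of $C$ is governed by $\mathrm{lcm}(c_A,c_B)$, so obtaining the exact linear bound $m+n$ requires the sharper, window-based control of $P$ and $Q$ above and, in the cases where an operand is finite, the precise relation between the state complexity of a finite unary language and the length of its longest word. By contrast, the lower bound hinges on the single clean identity $P(m+n-2)+Q(m+n-2)=m+n-2$.
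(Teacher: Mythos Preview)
Your proposal is correct and follows essentially the same route as the paper. Both arguments reduce to exponent arithmetic, split on finiteness of the operands, use a window-hits-every-period observation to show that in the infinite--infinite case every length $\ge m+n-1$ lies in the overlap, and verify the witness by noting that the largest operand lengths not exceeding $m+n-2$ are $n-1$ and $m-1$. Your $P(j)+Q(j)\ge j+1$ criterion is a tidy repackaging of exactly the computation the paper does with explicit $k,k'$; your treatment of the mixed and both-finite cases is terser than the paper's sub-case analysis ($m\le n-2$ versus $m>n-2$) but leads to the same strict bound below $m+n$ once filled in.
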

\begin{proof}
We consider three cases:
\begin{description}
\item{\bf Two infinite languages}

Since languages $L_m$ and $L_n$ are regular and infinite, there are some $i,j \leq m$ and $i', j' \leq n$ such that $L_m \supseteq \{a^{ik+j} \mid k \geq 0\}$ and $L_n \supseteq \{a^{i'k' + j'} \mid k' \geq 0\}$.

Let $t \geq m+n-1$; we show that $a^t \in L_m \overlap L_n$.
Choose $k$ and $k'$ to be the maximum integers such that $ik + j \leq t$ and $i'k' + j' \leq t$. The longest word in $a^{ik + j} \overlap a^{i'k' + j'}$ is $a^{(ik+j) + (i'k'+j') - 1}$. By definition of $k$, we have $ik+j+i > t$; so $ik+j \geq t-i+1$. 
Similarly, $i'k'+j' \geq t-i'+1$. However, 
\begin{gather*}
(ik + j) + (i'k' + j') - 1 \geq (t-i+1) + (t-i'+1) - 1 \\
= 2t - i - i' + 1  \geq 2t - m - n + 1 \geq t.
\end{gather*}
Therefore for any $t \geq m+n-1$, $a^t \in a^{ik + j} \overlap a^{i'k' + j'}$. The longest word that might not be in $L_m \overlap L_n$ is $a^{m+n-2}$, and so  the state complexity of $L_m \overlap L_n$ is at most $m+n$.

Next, we prove that the bound is met by the languages given in the theorem.
Since we showed that $L_m \overlap L_n$ contains all $a^t$ with $t \geq m+n-1$, it is sufficient to show that $a^{m+n-2}$ is not in  $L_m \overlap L_n$.
Note that $a^{m+n-1}$ is in both $L_m$ and $L_n$, and we cannot obtain $a^{m+n-2}$ if either word in $L_m$ or $L_n$ has length $\ge m+n-1$. Therefore we only need to consider the next longest words, which are $a^{n-1} \in L_m$ and $a^{m-1} \in L_n$.
Since the longest word in $a^{n-1} \overlap a^{m-1}$ is $a^{m+n-3}$, we have $a^{m+n-2} \notin L_m \overlap L_n$.
Therefore the state complexity is $m+n$.

\item{\bf Two finite languages}

Now the longest word in $L_m$ is $a^{m-2}$ and the longest word in $L_n$ is $a^{n-2}$. Therefore the longest word in $L_m \overlap L_n$ is $a^{m+n-5}$. 
Hence the state complexity of $L_m \overlap L_n$ is exactly $m+n-3$.

\item{\bf An~infinite language and a~finite one}

We prove the following claim: Let $m,n \ge 1$, let $L_m$ be an~infinite unary language, and let $L_n$ be a~finite unary language.
If $m \leq n-2$, then the state complexity of $L_m \overlap L_n$ is at most $n-1$.
Otherwise, it is at most $m+n-2$.

We consider the following two cases:
	\begin{enumerate}
	\item
	$m\le n-2$
	
	We show that for  $t \geq n-2$,  $a^t \in L_m \overlap L_n$. By definition of $L_m$, 	there exists $a^s \in L_m$ with $s \leq t$ and $t-s \leq m-1 	\leq n-3$. 
	Hence  $a^t \in a^s \overlap a^{n-2}$ and so $a^t \in L_m \overlap L_n$.
	Therefore the state 	complexity of $L_m \overlap L_n$ is at most $n-1$. 
	\item
	$m> n-2$
	
	We show that there is $i \ge 1$  such that for all $t \ge n+m-2$ we have $a^t \in L_m \overlap L_n$ if and only if $a^{t-i} \in L_m \overlap L_n$.
	This proves that the quotients of $a^t$ and of $a^{t-i}$ are equal, so there exists a~unary DFA (not necessarily minimal) recognizing $L_m \overlap L_n$ with a~cycle of length $i$ and $n+m-2$ states.
	
	Let $i$ be the length of the cycle in a~minimal DFA of $L_m$.
	Then $i \le m$ and $m-i$ is the number of states in the initial path in this DFA.
	Since $L_n$ is finite, $a^{n-2}$ is its longest word.

	First assume that $a^t \in L_m \overlap L_n$.
	Then there are $a^{ik+x} \in L_m$ and $a^y \in L_n$ such that
	$k\ge 0$, $x \le m-1$, $y \le n-2$, and $\max\{ik+x,y\} \le t \le ik+x+y-1$.
	Because $x+y-1 \le m+n-4$ and $t \ge n+m-2$, it must be that $k \ge 1$.
	Then $a^{i(k-1)+x} \in L_m$.
	We have $t-i \ge (n+m-2)-m \ge n-2 \ge y$ and $i(k-1)+x \le t-i$, thus $\max\{i(k-1)+x,y\} \le t-i$.
	Also, from $t \le ik+x+y-1$ we have $t-i \le i(k-1)+x+y-1$.
	Therefore, $a^{ i(k-1)+x} \in L_m$ and $a^y \in L_n$ form $a^{t-i} \in L_m \overlap L_n$.

	Now assume that $a^{t-i} \in L_m \overlap L_n$.
	Since $a^{t-i} \in L_m \overlap L_n$, there are $a^{ik+x} \in L_m$ and $a^y \in L_n$ such that
	$k\ge 0$, $x \le m-1$, $y \le n-2$, and $\max\{ik+x,y\} \le t-i \le ik+x+y-1$.
	If $x \le m-i-1$, then $x+y-1 \le (m-i-1)+(n-2)-1 = m+n-i-4$ but $t-i \ge n+m-2-i$, which yields a~contradiction.
	If $x \ge m-i$, then $a^{ik+x}$ is accepted in a~state in the cycle of the DFA of $L_m$.
	Thus $a^{i(k+1)+x} \in L_m$ and, together with $a^y$, $a^{i(k+1)+x}$ forms $a^t \in L_m \overlap L_n$.
	Hence the state complexity of $L_m \overlap L_n$ is at most $m+n-2$.

	\end{enumerate}
\end{description}
In summary, the largest upper bound occurs if both languages are infinite, and the theorem holds.\qed
\end{proof}

\section{Binary Alphabet}\label{sec:binary}

We define the following binary DFAs for $m,n \ge 2$.
Let $\Sigma = \{a_0,a_1\}$.
Let $\cB_m(Q_m, \Sigma, \delta_m, 0, F)$ be defined as follows:
\begin{itemize}
\item $F = \{0\}$;
\item $a_0\colon \mathbf{1}_m$;
\item $a_1\colon (0,1,\ldots,m-1)$.
\end{itemize}
Let $\cB'_n(Q'_n, \Sigma, \delta'_n, 0', F')$ be defined as follows:
\begin{itemize}
\item $F = \{(n-1)'\}$;
\item $a_0\colon (1',\ldots,(n-1)')$;
\item $a_1\colon (0',1',\ldots,(n-1)')$.
\end{itemize}

\begin{figure}[thb]
\unitlength 14pt\small
\begin{center}\begin{picture}(18,14)(-4,-2)
\gasset{Nh=2,Nw=2,Nmr=1.25,ELdist=0.4,loopdiam=1.5}
\node[Nframe=n](name)(-2,12){\large$\mathcal{B}_m\colon$}
\node(0)(0,8){$0$}\rmark(0)
\node(1)(4,8){$1$}
\node[Nframe=n](dots)(8,8){\dots}
\node(m-1)(12,8){$m$-$1$}
\drawloop(0){$a_0$}
\drawloop(1){$a_0$}
\drawloop(dots){$a_0$}
\drawloop(m-1){$a_0$}
\drawedge(0,1){$a_1$}
\drawedge(1,dots){$a_1$}
\drawedge(dots,m-1){$a_1$}
\drawedge[curvedepth=2](m-1,0){$a_1$}

\node[Nframe=n](name')(-2,4){\large$\mathcal{B}'_n\colon$}
\node(0')(0,2){$0'$}
\node(1')(4,2){$1'$}
\node[Nframe=n](dots')(8,2){\dots}
\node(n-1')(12,2){$(n$-$1)'$}\rmark(n-1')
\drawedge(0',1'){$a_1$}
\drawedge(1',dots'){$a_0,a_1$}
\drawedge(dots',n-1'){$a_0,a_1$}
\drawedge[curvedepth=3.5](n-1',0'){$a_1$}
\drawloop(0'){$a_0$}
\drawedge[curvedepth=2](n-1',1'){$a_0$}
\end{picture}\end{center}
\caption{ Binary automata $\mathcal{B}_m$ and $\mathcal{B}'_n$ such that  
$L(\mathcal{B}_m ) \odot L(\mathcal{B}'_n) $  has exponential state complexity.}
\end{figure}
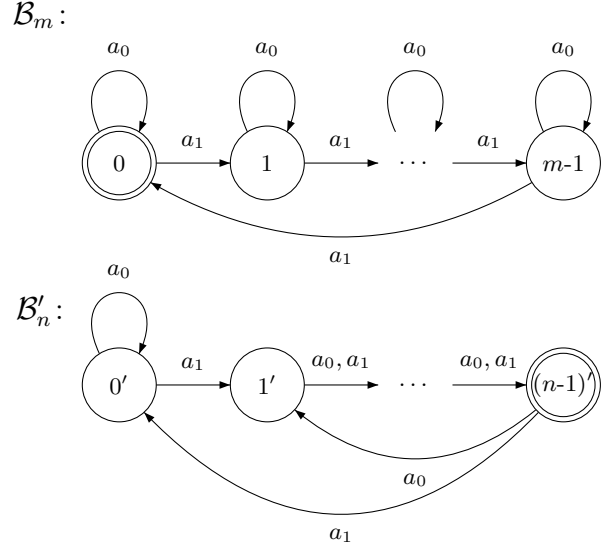

\begin{theorem}
For $m \ge 2$ and $n \ge 3$, the state complexity of $L(\mathcal{B}_m) \overlap L(\mathcal{B}'_n)$ is at least $m(2^{n-1}-2)+2$.
\end{theorem}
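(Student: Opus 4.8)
The plan is to follow the same two-step template as in Theorem~\ref{thm:meets_upper_bound}: exhibit a family of reachable subsets of the form~(\ref{eq:reach}) that are pairwise distinguishable, and count them. Since by Theorem~\ref{thm:min_alphabet} the full bound of Theorem~\ref{thm:upper_bound} cannot be met over a binary alphabet, the goal is only to produce $m(2^{n-1}-2)+2$ such subsets. As before I track each reachable subset $S=\{(q,s')\}\cup(\{q\}\times S')\cup(\{t\}\times T')$ by its selector $q\in Q_m$, its core $S'\subseteq Q'_n$, and its subcore $T'\subseteq S'$, and I use the fact from Lemma~\ref{lem:subsets} that for a non-final selector the membership of $0'$ in the core is irrelevant to distinguishability, so the distinguishing data is really the triple (selector, core without $0'$, subcore).

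First I would record how the two letters act on the determinized automaton. In $\cB_m$ the letter $a_0$ fixes the selector while $a_1$ advances it by one along the $m$-cycle; in $\cB'_n$ both letters act as permutations of $Q'_n$ augmented by inserting the image of $0'$, so reading a letter $a$ replaces the core $S'$ by $\delta'_n(S',a)\cup\{\delta'_n(0',a)\}$ and the subcore $T'$ by $\delta'_n(T',a)$, with the subcore reset to the entire core exactly when the selector lands on the final state $0$. Two consequences drive the construction: the core size never decreases, and a subcore is ``loaded'' from the current core only at the moments when the selector passes through $0$, being merely permuted in between. Moreover $a_0$ fixes $0'$ and rotates $\{1',\dots,(n-1)'\}$ cyclically, whereas $a_1$ rotates all of $Q'_n$. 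The family I would count has, for each selector $q\in Q_m$, one reachable subset for each subcore $T'$ drawn from the subsets of $\{1',\dots,(n-1)'\}$; there are $2^{n-1}$ such subcores, of which two per selector turn out not to be reachable with this alphabet, and two further subsets (one being the initial subset $\{(0,s')\}$) are counted separately, for a total of $m(2^{n-1}-2)+2$.

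Reachability is the crux and the \emph{main obstacle}. Unlike the $n$-letter witness of Theorem~\ref{thm:meets_upper_bound}, here there is no letter that can add a single chosen state to the core while leaving the selector and the remaining states untouched: $a_1$ simultaneously advances the selector, rotates $Q'_n$, and inserts $1'$, so its roles are even more tightly coupled than the dual role already flagged for the $n$-letter case. I would therefore argue reachability by a staged induction: use the initial segment $a_1^k$, read before the selector first returns to $0$, to realize the empty-subcore configurations and to grow the core one state at a time; use $a_0$ to reinsert $0'$, to reset, and to produce controlled cyclic shifts of the non-zero states; and exploit that having $0'\in S'$ freezes the core size under both letters, so that a full loop $a_1^m$ can bring the selector back to $0$ without enlarging the core and thereby load a subcore of the desired (small) size, which subsequent letters then shift while growing the surrounding core. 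The delicate point is to show that every targeted (selector, subcore) pair is hit and that the only omissions are the two per-selector exceptions accounted for in the count.

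Distinguishability I expect to be comparatively routine and to mirror the argument in Theorem~\ref{thm:meets_upper_bound}. The only final states of $\cN$ lie in $\{t\}\times\{(n-1)'\}$, so a subset is accepting exactly when $(n-1)'$ belongs to its subcore. Since the row $\{t\}\times Q'_n$ is closed and receives no new states as long as the selector avoids $0$, I would separate two subcores by applying powers of $a_0$, which fix the selector and rotate $\{1',\dots,(n-1)'\}$ cyclically inside the subcore, so as to bring a state of their symmetric difference onto $(n-1)'$; differing selectors are separated by cycling $a_1$ until the finality patterns differ, and the one troublesome state $0'$ is handled by a single application of $a_1$ before rotating with $a_0$. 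Combining the reachability count with distinguishability then yields the stated lower bound $m(2^{n-1}-2)+2$.
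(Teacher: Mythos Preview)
Your high-level template (reachability plus distinguishability, tracking selector/core/subcore via Lemma~\ref{lem:subsets}) matches the paper, but your decomposition is different from the paper's and the differences create real gaps.

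The paper parameterizes the counted family by \emph{(selector, core)}, not by (selector, subcore). For every selector $q$ and every \emph{core} $S'$ with $\emptyset\neq S'\subsetneq\{1',\dots,(n-1)'\}$, the paper exhibits \emph{some} reachable subset with that selector and core; the subcore is left completely uncontrolled. The excluded pairs are thus the empty and the full core, which accounts transparently for the $2^{n-1}-2$ and for the ``$+2$'' (initial subset, plus one subset with full core). For distinguishability the paper again works through the core: given two subsets with the same selector and different cores, it uses the word $a_0^{n-1-r}a_1a_0^{r-1}$ (with $r'\in S'_1$) repeatedly to advance the selector to $0$ \emph{while preserving the core $S'_1$}; at selector $0$ the core becomes the subcore, and only then is acceptance read off. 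The assumption that $S'_1$ is non-full is exactly what makes the selector case work: one needs a state $p'\notin S'_1$ to rotate so that $(n-1)'$ is absent before the crucial $a_1$.

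Your subcore-based count runs into two concrete problems. First, reachability: you never say which two subcores per selector are excluded, and the sketch does not show how to realize a prescribed subcore. At a selector $q\neq 0$ the subcore is the image, under the word read since the last visit to $0$, of the \emph{entire} $S'$ at that visit; so you must simultaneously control that earlier $S'$ and the permutation induced by a word containing exactly $q$ occurrences of $a_1$. Nothing in your staged induction addresses this coupling, whereas the paper sidesteps it entirely by ignoring the subcore during reachability. Second, your selector-separation step ``cycling $a_1$ until the finality patterns differ'' is not an argument: each time $a_1$ carries a selector through $0$, the subcore is overwritten by the current $S'$, so the two subcores you are trying to keep apart can be reloaded with unrelated sets and there is no reason the acceptance patterns must ever differ. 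The paper's argument avoids this by first normalizing selectors with the core-preserving word above and then using the non-full core to force $(t,(n-1)')$ into exactly one of the two subsets. In short, switching from core to subcore as the indexing datum makes both halves of the proof substantially harder, and your sketch does not supply the missing pieces.
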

\begin{proof}
The proof is based on   ideas  similar to those in the proof of Theorem~\ref{thm:meets_upper_bound}. 

\medskip

\noindent\textit{Reachability}:
We show that for each selector $q \in Q_m$ and each core $\emptyset \neq S' \subseteq Q'_n \setminus \{0'\}$, there exists a~reachable subset  $S$ with some subcore, that is:
$$S = \{(q,s')\} \cup (\{q\} \times (S' \cup \{0'\})) \cup \{\{t\} \times T'\},$$
for some subcore $T' \subseteq S' \cup \{0'\}$.

First, we show that we can reach a~subset of that form but for some selector $p \in Q_m$ that is not necessarily $q$.
We prove this by induction on $|S'|$.
For $S'=\{r'\}$, we apply $a_1 a_0^{r-1}$, which yields $\{(1,s'),(1,0'),(1,r')\}$.
Let $|S'| \ge 2$ and assume that the claim holds for smaller subsets $S'$.
Let $r' \in S'$ be a~state and let $X' = S' \setminus \{r'\}$,
By assumption we can reach
$$X = \{(p,s')\} \cup (\{p\} \times (X' \cup \{0'\})) \cup \{\{t\} \times Y'\},$$
for some $Y' \subseteq X'$.
We apply $a_0^{m-1-r} a_1 a_0^{r-1}$ for $X$.
This first maps $X'$ to its cyclic shift without state $(m-1)'$, then state $1'$ is added by $a_1$ and the selector is changed, and we again cyclically shift to get $X'$.
Finally, we apply $a_0^{n-1}$ to ensure that $(q,0')$ is present; this yields the desired subset $S$.

Now, to change the selector from $p$ to $q$ we use the same technique.
It is enough to show that from a~subset with selector $p$ we can reach a~subset with the selector $(p+1) \bmod m$ and the same core $S'$.
We choose a~state $r' \in S'$, and then use $a_0^{m-1-r} a_1 a_0^{r-1}$.
This first changes the core so that $(m-1)'$ is there, then the selector is changed by $a_1$, and the core is cyclically shifted back to $S'$.

\medskip

\noindent\textit{Distinguishability}:
We will show that all the  subsets above such that $S' \neq Q'_n \setminus \{0'\}$ together with the initial subset and one of the subsets with $S'=Q'_n \setminus \{0'\}$ are pairwise distinguishable.
The number of non-empty and not full cores $S'$ is $2^{n-1}-2$, which together with the $m$ choices for the selector $q$ yields $m(2^{n-1}-2)$.
Adding the initial subset and the subset with full $S'$ yields the desired formula.

Without loss of generality, let
\begin{align*}
S_1 =\ & \{(q_1,s')\} \cup (\{q_1\} \times (S_1' \cup \{0'\})) \cup \{\{t\} \times T_1'\},\\
S_2 =\ & \{(q_2,s')\} \cup (\{q_2\} \times (S_2' \cup \{0'\})) \cup \{\{t\} \times T_2'\},
\end{align*}
be such that  $\emptyset \neq S'_1 \subsetneq Q'_n \setminus \{0'\}$,
$\emptyset \neq S'_2 \subseteq Q'_n \setminus \{0'\}$,
$T_1',T_2' \subseteq Q'_n$,
and $S'_1 \neq S'_2$ or $q_1 \neq q_2$.
Moreover, we can assume that $|S'_1| \le |S'_2|$.

First consider the case $q_1 \neq q_2$.
Let $r'$ be such that $r' \in S'_1$.
As before, by applying $a_0^{n-1-r} a_1 a_0^{r-1}$, from $S_1$  we reach a~subset with selector $(q_1+1) \bmod m$ and the same core $S'_1$.
Similarly $S_2$ is mapped to a~subset with selector $(q_2+1) \bmod m$.
We repeat this procedure until $S_2$ is mapped to a~subset with selector $(m-1,s')$, that is, for $S_1$ and $S_2$ we apply $(a_0^{n-1-r} a_1 a_0^{r-1})^{m-1-r}$.
Since $q_1 \neq q_2$, the first subset obtained from $S_1$ has selector $q \neq m-1$.
Now let $p' \in Q'_n \setminus (S'_1 \cup \{0'\})$.
We apply $a_0^{n-1-p}$, which  causes $(n-1)'$ to be absent from the core of the first subset.
Since a~subcore is always a~subset of the core with $(0,t')$ added, $(n-1)'$ is also  absent from the subcore of the first subset.
We apply $a_1$ and obtain:
\begin{align*}
X_1 =\ & \{(q+1,s')\} \cup (\{q+1\} \times Y'_1) \cup \{\{t\} \times Z'_1\},\\
X_2 =\ & \{(0,s')\} \cup (\{0\} \times Y'_2) \cup \{\{t\} \times Z'_2\},
\end{align*}
for some $Z'_1 \subseteq Y'_1 \subseteq Q'_n$ and $Z'_2 \subseteq Y'_2 \subseteq Q'_n$.
Since $(n-1)'$ was not in the subcore of the first subset and $q+1 \neq 0$, we have $0' \notin Z'_1$.
We apply $a_0^{n-1}$.
Since $0' \notin Z'_1$ and $q+1 \neq 0$, from $X_1$ we obtain a~subset that does not have final state $(t,(n-1)')$.
On the other hand, from $X_2$ state $(0,s')$ is mapped by $a_0$ to $(0,1')$ and then by an~$\eps$-transition to $(t,1')$.
This is then mapped to final state $(t,(n-1)')$ by $a_0^{n-2}$.

Now consider the case $q_1 = q_2$ and $S'_1 \neq S'_2$.
Since $S'_2$ is not a~subset of $S'_1$, there is a~state $p'$ such that $p' \notin S'_1$ and $p' \in S'_2$.
Let $r' \in S'_1$.
We apply $a_0^{m-1-r} a_1 a_0^{r-1}$ as before, which changes the selector to $(q_1+1) \bmod m$, but does not change the core $S'_1$ of the first subset.
We repeat this until selector $0$ is reached.
Then we still have $p' \notin S'_1$ but $p' \in Y'_2$, where $Y'_2$ is the core of the second subset.
We apply $a_0^{n-1-p}$.
Then the first subset does not have final state $(t,(n-1)')$, but the second one does.

Finally, we need to distinguish the initial subset from the other subsets.
For the initial subset, we observe that applying either $a_0 a_1 a_0^{n-1}$ or $a_1 a_0^{n-1}$ results in $\{(1,s'),(1,0'),(1,1')\}$.
On the other hand, every other subset that we have to consider has a~non-empty core $S'_2$.
If $S'_2 = \{(n-1)'\}$ then we apply $a_0 a_1 a_0^{n-1}$, otherwise $a_1 a_0^{n-1}$.
In both cases, this results in a~subset that has a~different core than $\{1'\}$, thus can be distinguished from $\{(1,s'),(1,0'),(1,1')\}$ as we showed before.\qed
\end{proof}

\bibliographystyle{splncs04}
\providecommand{\noopsort}[1]{}

\end{document}